\def\arrowedvec{\mathaccent"017E}
\newcommand{\arrowedvecstar}{\arrowedvec{\star}}
\renewcommand{\ext}[1]{#1^{\rightleftarrows}}
\newcommand{\measured}[1]{\mathbf{#1}}
\newcommand{\graphing}[1]{[{\mathcal #1}]}
\renewcommand{\measure}[1]{\measured{#1}}
\renewcommand{\model}[2]{\mathbb{M}[#1,\mathfrak{#2}]}
\newcommand{\icc}{{\sc icc}\xspace}
\newcommand{\gct}{{\sc gct}\xspace}
\newcommand{\pram}{{\sc pram}\xspace}
\newcommand{\prams}{{\sc pram}s\xspace}
\newcommand{\GoI}{{\sc goi}\xspace}
\newcommand{\IG}{{\sc ig}\xspace}
\newcommand{\acton}{\curvearrowright}
\newcommand{\altpath}[1]{\mathrm{AltPath}(#1)}
\newcommand{\seq}[1]{\mathbf{#1}}
\newcommand{\pushone}{{\tt push}_{1}}
\newcommand{\pushzero}{{\tt push}_{0}}
\newcommand{\pushstar}{{\tt push}_{\star}}
\newcommand{\pop}{{\tt pop}}
\renewcommand{\word}[1]{{\mathtt #1}}
\renewcommand{\twwprojects}{\cond{\sharp W}_{2}}
\renewcommand{\graphtw}[1]{W_{\word{#1}}}
\renewcommand{\graphingtw}[1]{W_{\word{#1}}}
\renewcommand{\graphtw}[1]{\bar{W}_{\word{#1}}}
\renewcommand{\ListType}{\cond{Nat}_{2}}
\renewcommand{\reptw}[1]{{\tt Gp}(\word{#1})}
\renewcommand{\graphing}[1]{[{\mathcal #1}]}
\renewcommand{\repany}[1]{\mathbf{Rep}(\word{#1})}
\newcommand{\halfidentity}[1][]{\mathrm{Id}^{1/2}_{#1}}
\newcommand{\pred}[2][]{\mathbf{Pred}^{\mathtt{#1}}(\microcosm{#2})}
\newcommand{\predcondet}[1]{\pred[co-ndet]{#1}}
\newcommand{\predndet}[1]{\pred[ndet]{#1}}
\newcommand{\preddet}[1]{\pred[det]{#1}}
\newcommand{\predprob}[1]{\pred[prob]{#1}}
\newcommand{\ListTypeELL}{{\rm BList}}
\newcommand{\testdet}{\testdetpos}
  \providecommand\BibTeX{{%
    \normalfont B\kern-0.5em{\scshape i\kern-0.25em b}\kern-0.8em\TeX}}}
\begin{document}

\title{Probabilistic Complexity Classes through Semantics}

\author{Thomas Seiller -- \texttt{seiller@lipn.fr}\\ %
  CNRS \& University Sorbonne Paris North, France
}



\maketitle

\begin{abstract}
In a recent paper, the author has shown how Interaction Graphs models for linear logic can be used to obtain implicit characterisations of non-deterministic complexity classes. In this paper, we show how this semantic approach to Implicit Complexity Theory (ICC) can be used to characterise deterministic and probabilistic models of computation. In doing so, we obtain correspondences between group actions and both deterministic and probabilistic hierarchies of complexity classes. As a particular case, we provide the first implicit characterisations of the classes \PLogspace (unbounded error probabilistic logarithmic space) and \PPtime (unbounded error probabilistic polynomial time)
\end{abstract}

\section{Introduction}

Complexity theory finds its root in three different papers that, in the span of a single year, tackled the difficult question of defining a notion of \emph{feasible} computation \cite{ cobham,edmonds65,hartmanisstearns}. Interestingly, the three authors independently came up with the same answer, namely the class \FPtime of polynomial time computable functions. Indeed, while computability finds its roots in mathematical logic, i.e. very far from actual computing devices, it became quite clear at the time that the notion of \emph{computable function} does not fall into any reasonable notion of \emph{effective computability}. 
Based on this, the fields of complexity theory, whose aim is the definition and classification of functions based on how much resources (e.g. time, space) are needed to compute them, quickly developed. 

While progress on the classification problem was quick in the early days, new results quickly became more and more scarcer. The difficulty of the classification problem can be explained in several ways. First, from a logical point of view, the question of showing whether a complexity class cannot compute a given function corresponds to showing the negation of an existential statement. But the severe difficulty of this problem can be understood through negative results known as \emph{barriers}, i.e. results stating that currently known methods cannot solve current open problems. While three such barriers exist, we will take the stand that only two \emph{conceptual} barriers exists, the \emph{algebrization} barrier being understood as a (far-reaching) refinement of the older \emph{relativization} barrier.

\paragraph{Barriers.} For the following short discussion about barriers, let us consider the famous \Ptime vs. \NPtime problem. The relativisation barrier is based on Baker, Gill and Solovay result \cite{relativisation} that there exists (recursive) oracles $\mathcal{A,B}$ such that $\Ptime^{\mathcal{A}}=\NPtime^{\mathcal{A}}$ and $\Ptime^{\mathcal{B}}\neq\NPtime^{\mathcal{B}}$. This implies that any proof method that is oblivious to the dis/use of oracles -- in other words, that \emph{relativises} -- will not answer the \Ptime vs. \NPtime problem. The \emph{algebrization} barrier \cite{algebrisation} provides a conceptually similar but refined negative result for algebraic methods, i.e. for those proof methods that are oblivious to the dis/use of \emph{algebraic extensions} of oracles -- which are said to \emph{algebrize}. The third barrier provides a somewhat orthogonal negative result, based on a notion of \emph{natural proofs}: roughly speaking a proof is natural if it can be formulated as separating two complexity classes using a predicate on boolean functions that satisfies three properties: \emph{largeness} -- i.e. most problems in the smallest of the two classes will satisfy the predicate --, and \emph{constructibility} -- i.e. the predicate is decidable in exponential time.
Razborov and Rudich showed \cite{naturalproofs} that under the assumption of the existence of exponentially-hard random generators, a natural proof cannot be used to prove that $\Ptime\neq\NPtime$.

These three barriers altogether capture all known separation methods to date, and therefore state that proving new separation results will \enquote{require radically new techniques} \cite{algebrisation}. In the last twenty years, one research program has been thought of as the only serious proposal for developing new separation methods that would circumvent the barriers, namely Mulmuley's Geometric Complexity Theory (\gct) programme \cite{GCT1,GCT2,GCTsurvey1,GCTsurvey2}. Inspired from a geometric proof of lower bounds on a (weakened) variation of the \pram model \cite{MulmuleyPRAM}, Mulmuley \gct methods are based on involved methods from algebraic geometry. However, after twenty years of existence, the \gct program has been dented with a few negative results in the last years.

\paragraph{Graphings.} The author recently proposed a new approach to complexity theory, one which may lead to new proof methods for separation \cite{seiller-towards,seiller-tacl,seiller-goinda}. Although this claim might not be formally justified at this point, let us point out that a recent preprint uses these methods to recast and improve lower bounds in algebraic complexity \cite{seiller-pramsLB}. The principal idea behind the approach is to propose a new general mathematical theory of computation that accounts for the dynamics of programs. In essence, the guiding intuition is that a computation should be mathematically modelled as a dynamical system, in the same way physical phenomena are. Obviously, while a computation (i.e. a run of a program) is deterministic and can be represented as such, a \emph{program} is not in general: it might be e.g. probabilistic, deterministic, and represent in itself several possible runs on a given input. Seiller's proposal is therefore to work with generalisations of dynamical systems introduced under the name of \emph{graphings}. Similarly to dynamical systems, graphings come in three different flavours: discrete, topological and measurable. The distinction between those does not impact the following discussion, although the results in this paper will use \emph{measurable graphings}.

Graphings were initially introduced in the context of ergodic theory \cite{adams,gaboriaul2}, and entered the realm of theoretical computer science through work on semantics of linear logic \cite{seiller-goig,seiller-goie,seiller-goif}. The main result in this aspect is that a collection of graphings built from a monoid action onto a space $\alpha: M\acton X$ -- i.e. graphings that can be locally identified with endomorphisms of the type $\alpha(m)$ -- gives rise to a model of (fragments of) linear logic. It is in this context that Seiller discovered that the action $\alpha: M\acton X$ can be put into correspondence with complexity classes, i.e. choosing different monoid actions leads to models in which the represented programs are of limited complexity, formalising an intuition that already appeared in the more involved context of operator algebras \cite{seiller-masas}. The first result in this direction \cite{seiller-goinda} provided a correspondence between a hierarchy of group actions $\microcosm{m}_{1},\microcosm{m}_{2},\dots,\microcosm{m}_{k},\dots,\microcosm{m}_{\infty}$ and a hierarchy of complexity classes between (and including) \Regular -- the classe of regular languages -- and \coNLogspace.

\paragraph{Contributions.} The current paper provides similar characterisations of the corresponding deterministic, non-deterministic (with the usual notion of acceptance), and probabilistic hierarchies. This is an important step in the overall program, as it shows the techniques apply to several computational paradigms. From a more external point of view, the techniques provides the first implicit characterisation probabilistic complexity classes, such as  \PLogspace (resp. \PPtime) of problems decidable (with unbounded error) by a probabilistic machine using logarithmic space (resp. polynomial time) in the input. Figure \ref{contribs} recapitulates\footnote{This table is not exhaustive, but shows the most common classes. In particular, this paper also characterises numerous classes not shown here, notably the classes of languages recognized by $k$-head two-way atutomata with a pushdown stack where $k$ is a fixed integer.} the known characterisations through the author's methods, showing the results of the current paper in white cells (previous results are shown in gray cells).

\begin{figure*}
\begin{center}
\begin{tabular}{|c||c|c|c|c|}
\hline
Microcosm & deterministic model & \multicolumn{2}{c|}{non-deterministic model} & probabilistic model \\\hline
$\microcosm{m}_{1}$ & \cellcolor{blue!10}$\Regular$ & \cellcolor{blue!10}\Regular & \Regular & \cellcolor{blue!10}\Stochastic  \\
$\vdots$&$\cellcolor{blue!10}\vdots$&$\cellcolor{blue!10}\vdots$&$\vdots$&$\cellcolor{blue!10}\vdots$\\
$\microcosm{m}_{k}$ & \cellcolor{blue!10}$\textsc{d}_{k}$ & \cellcolor{blue!10}$\textsc{n}_{k}$& $\textsc{co-n}_{k}$ & \cellcolor{blue!10}$\textsc{p}_{k}$ \\
$\vdots$&\cellcolor{blue!10}$\vdots$&\cellcolor{blue!10}$\vdots$&$\vdots$&\cellcolor{blue!10}$\vdots$\\
$\microcosm{m}_{\infty}$ & \cellcolor{blue!10}\Logspace & \cellcolor{blue!10}\NLogspace& \coNLogspace & \cellcolor{blue!10}\PLogspace  \\
\hline
$\microcosm{n}$ & \cellcolor{blue!10}\Ptime & \cellcolor{blue!10}\Ptime& \cellcolor{blue!10}\Ptime & \cellcolor{blue!10}\PPtime  \\
\hline
\end{tabular}
\end{center}
\caption{Known semantic characterisations of predicate complexity classes. Our contributions are shown in blue cells.\newline \small $\textsc{d}_{k}$ (resp. $\textsc{n}_{k}$, $\textsc{p}_{k}$) is the languages decided by two-way k-heads (resp. nondeterministic, probabilistic) automata.}\label{contribs}
\end{figure*}


It is important to understand that this approach can be understood both from a logical and a computability point of view. While coming from models of linear logic, the techniques relate to \emph{Implicit Computational Complexity} (\icc) and can be seen from this perspective as a semantic variant of constraint linear logics. However the modelling of programs as graphings go beyond the usual scope of the Curry-Howard correspondence, allowing for instance the sound representation of \pram machines\footnote{Although the cited work shows how to interpret some algebraic variant of \prams, it should be clear to the reader that the techniques used apply immediately to the usual notion of \prams.} \cite{seiller-pramsLB}. From this perspective, one could argue for this approach as a "computing with dynamical systems", with the underlying belief that any computation might be represented as such.

\section{Interaction Graphs and Complexity}

Interaction Graphs models of linear logic were developed in order to generalise Girard's geometry of interaction constructions to account for \emph{quantitative} aspects, in particular adapting to non-deterministic and probabilistic settings. The aim of the \GoI (and hence the \IG) approach is to obtain a \emph{dynamic} model of proofs and their cut-elimination procedure. 

\subsection{Graphings, Execution and Measurement}

In the general setting, graphings can be defined in three different flavours: discrete, topological and measurable. In this paper, we will be working with measurable graphings, and will refer to them simply as \emph{graphings}. Graphings act on a chosen space (hence here, on a measured space); the definition of graphings makes sense for any measured space $\measured{X}$, and under some mild assumptions on $\measured{X}$ it provides a model of (at least) Multiplicative-Additive Linear Logic (\mall) \cite{seiller-goig,seiller-goie,seiller-goif}. We now fix the measure space of interest in this paper.

\begin{definition}[The Space]
We define the measure space $\measure{X}=\realN\times[0,1]^{\naturalN}\times\{\star,0,1\}^{\naturalN}$ where $\realN\times[0,1]^{\naturalN}$ is considered with its usual Borel $\sigma$-algebra and Lebesgue measure. The space $\{\star,0,1\}^{\naturalN}$ is endowed with the natural topology\footnote{I.e. the topology induced by basic \emph{cylindrical} open sets $V(w)=\{ f:\naturalN\rightarrow\{0,1\} \mid \forall i\in [\lg{w}], f(i)=w_i\}$ where $w$ is a finite word on $\{0,1\}$.}, the corresponding Borel $\sigma$-algebra and the natural measure given by $\mu(V(w))=3^{-\lg{w}}$.
\end{definition}

Borrowing the notation introduced in previous work \cite{seiller-goif,seiller-goinda}, we denote by $(x,\seq{s},\seq{\pi})$ the points in $\measure{X}$, where $\seq{s}$ and $\seq{\pi}$ are sequences for which we allow a concatenation-based notation, e.g.\ we write $(a,b)\cdot \seq{s}$ for the sequences whose first two elements are $a,b$ (and we abusively write $a\cdot\seq{s}$ instead of $(a)\cdot\seq{s}$). Given a permutation $\sigma$ over the natural numbers, we write $\sigma(\seq{s})$ the result of its natural action on the $\naturalN$-indexed list $\seq{s}$.

Graphings are then defined as objects acting on the measured space $\measured{X}$. A parameter in the construction allows one to consider subsets of graphings based on \emph{how} they act on the space. To do this, we fix a monoid of measurable maps that we call a \emph{microcosm}. Again, while graphings can be defined in full generality, some conditions on the chosen microcosms are needed to construct models of \mall. The following microcosms, which are of of interest in this paper, do satisfy these additional requirements.

\begin{definition}[Microcosms]
For all integer $i\geqslant 1$, we consider the translations 
\[ \ttrm{t}_{z}:(x,\seq{s},\seq{\pi})\mapsto(x+z,\seq{s},\seq{\pi}) \] 
for all integer $z$, and the permutations 
\[ \ttrm{p}_{\sigma}:(x,\seq{s},\seq{\pi})\mapsto(x,\sigma(\seq{s}),\seq{\pi}) \]
for all bijection $\sigma:\naturalN\rightarrow\naturalN$ such that $\sigma(k)=k$ for all $k>i$. 
We denote by $\microcosm{m}_i$ the monoid generated by those maps, and by $\microcosm{m}_{\infty}$ the union $\cup_{i>1}\microcosm{m}_{i}$.

We also consider the maps:
\[ \pop: (x,\seq{s},c\cdot\seq{pi})\mapsto (x,\seq{s},\seq{pi}) \]
\[ \pushzero: (x,\seq{s},\seq{pi})\mapsto (x,\seq{s},0\cdot\seq{pi}) \]
\[ \pushone: (x,\seq{s},\seq{pi})\mapsto (x,\seq{s},1\cdot\seq{pi}) \]
\[ \pushstar: (x,\seq{s},\seq{pi})\mapsto (x,\seq{s},\star\cdot\seq{pi}) \]
We denote by $\microcosm{n}_i$ the monoid generated the microcosm $\microcosm{m}_i$ extended by those three maps, and by $\microcosm{n}_{\infty}$ the union $\cup_{i>1}\microcosm{n}_{i}$.

Finally, let us denote by $a\bar{+}b$ the fractional part of the sum $a+b$.
We also define the microcosms $\bar{\microcosm{m}}_{i}$ (resp. $\bar{\microcosm{n}}_{i}$) as the smallest microcosms containing $\microcosm{m}_{i}$ (resp. $\microcosm{n}_{i}$) and all translations $\ttrm{t}_{\lambda}:(x,a\cdot \seq{s})\mapsto(x, (a\bar{+}\lambda)\cdot\seq{s})$ for $\lambda$ in $[0,1]$.
\end{definition}

We are now able to define graphings. Those are formally defined as quotient of graph-like objects called \emph{graphing representatives}\footnote{In earlier works \cite{seiller-goig}, the author did not introduce separate terminologies for graphings and graphing representatives. While this can be allowed since all operations considered on graphings are compatible with the quotient, we chose here a more pedagogical approach of introducing a clear distinction between those types of objects.}. Graphing representatives are just (countable) families of weighted \emph{edges} defined by a source -- a measurable subset of $\measured{X}\times D$ where $S$ is a fixed set of \emph{states} -- and  a \emph{realiser}, i.e. a map in the considered microcosm and a new state in $D$. 

As in previous work \cite{seiller-goinda}, we fix the monoid of weights of graphings $\Omega$ to be equal to $[0,1]\times\{0,1\}$ with usual multiplication on the unit interval and the product on $\{0,1\}$. To simplify notations, we write elements of the form $(a,0)$ as $a$ and elements of the form $(a,1)$ as $a\cdot \mathbf{1}$. On this set of weights, we will consider the fixed parameter map $m(x,y)=xy$ in the following (used in \autoref{def:measurement}.

\begin{definition}[Graphing representative]
We fix a measure space $\measure{X}$, a microcosm $\microcosm{m}$, a monoid $\Omega$, a measurable subset $V^{G}$ of $\measure{X}$, and a finite set $S^{G}$. A ($\Omega$-weighted) \emph{$\microcosm{m}$-graphing representative} $G$ of \emph{support} $V^{G}$ and \emph{stateset} $S^{G}$ is a countable set 
$$\{(S^{G}_{e},\ttrm{i}^{G}_{e},\ttrm{o}^{G}_{e},\phi^{G}_{e},\omega^{G}_{e})~|~e\in E^{G}\},$$
where $S^{G}_{e}$ is a measurable subset\footnote{As $D^{G}$ is considered as a discrete measure space, a measurable subset of the product is simply a finite collection of measurable subset indexed by elements of $D^{G}$.} of $V^{G}\times D^{G}$, $\phi^{G}_{e}$ is an element of $\microcosm{m}$ such that $\phi_{e}^{G}(S^{G}_{e})\subseteq V^{G}$, $\ttrm{i}^{G}_{e},\ttrm{o}^{G}_{e}$ are elements of $D^{G}$, and $\omega^{G}_{e}\in\Omega$ is a \emph{weight}. We will refer to the indexing set $E^{G}$ as the set of \emph{edges}. For each edge $e\in E^{G}$ the set $S^{G}_{e}\times\{\ttrm{i}^{G}_{e}\}$ is called the source of $e$, and we define the \emph{target} of $e$ as the set $T^{G}_{e}\times\{\ttrm{o}^{G}_{e}\}$ where $T^{G}_{e}=\phi^{G}_{e}(S^{G}_{e})$.
\end{definition}

The notion of graphing representatives captures more than just an action on a space. Indeed, consider a graphing representative $G$ with a single edge of source $S\disjun S'$, weight $w$ and realiser $f$. Then the graphing representative $H$ having two edges of weight $w$, realiser $f$ and respective sources $S$ and $S'$ intuitively represent the same action on $\measured{X}$ as $G$. In other words, the notion of graphing representative captures both the notion of action and a notion of representation of this action. The auuthor therefore defines a notion of \emph{refinement} which allows for the consideration of the equivalence $G\sim H$ that captures the fact that $G$ and $H$ represent the same action. 

%

\begin{definition}[Refinements]
Let $F,G$ be graphing representatives. Then $F$ is a refinement of $G$ -- written $F\leqslant G$ -- if there exists a partition\footnote{We allow the sets $E^{F}_{e}$ to be empty.} $(E^{F}_{e})_{e\in E^{G}}$ of $E^{F}$ s.t. $\forall e\in E^{G}, \forall f,f'\in E^{F}_{e}$:
\begin{itemize}[noitemsep,nolistsep]
\item $\omega^{F}_{f}=\omega_{e}^{G}\textrm{ and }\phi^{F}_{f}=\phi_{e}^{G}$;
\item $\cup_{f\in E^{F}_{e}} S^{F}_{f} =_{a.e.} S^{G}_{e}\textrm{ and }f\neq f' \Rightarrow \mu(S^{F}_{f} \cap  S^{F}_{f'})=0$.
\end{itemize}
\end{definition}

Then two graphing representatives are \emph{equivalent} if and only if they possess a common refinement. The actual notion of \emph{graphing} is then an equivalence class of the objects just defined w.r.t. this equivalence. 

\begin{definition}[Graphing]
A \emph{graphing} is an equivalence class of graphing representatives w.r.t. the induced  equivalence:
$$F\sim G\Leftrightarrow \exists H,~H\leqslant F\textrm{ and }H\leqslant G$$
\end{definition}

Since all operations considered on graphings were shown to be compatible with this quotienting \cite{seiller-goig}, i.e.\ well defined on the equivalence classes, we will in general make no distinction between a graphing -- as an equivalence class -- and a graphing representative belonging to this equivalence class.

\begin{definition}[Execution]\label{def:execution}
Let $F$ and $G$ be graphings such that $V^{F}=V\disjun C$ and $V^{G}=C\disjun W$ with $V\cap W$ of null measure. Their \emph{execution} $F\plug G$ is the graphing of support $V\disjun W$ and stateset $D^{F}\times D^{G}$ defined as the set of all\footnote{We refer to the author's work on graphing for the full definition. Intuitively, if $\pi$ is a path, then $[\pi]_{o}^{o}(C)$ denotes the restriction $\pi$ to the points in the source that lie outside the set $C$ and are mapped by the realiser of the path to an element outside of $C$.} $[\pi]_{o}^{o}(C)$ where $\pi$ is an element of $\altpath{F,G}$, the set of alternating path between $F$ and $G$, i.e. the set of paths $\pi=e_1e_2\dots e_n$ such that for all $i=1,\dots,n-1$, $e_i\in F$ iff $e_{i+1}\in G$.
$$ 
\begin{array}{rcl} F\plug G&=&\left\{\left(S^{\decoupe C}_{\pi},(\ttrm{i}_{e_{1}},\ttrm{i}_{e_{2}}),(\ttrm{o}_{e_{n-1}},\ttrm{o}_{e_{n}}),\phi_{\pi},\omega_{\pi}\right)\right.\\
&&\hspace{2em}\left.|~\pi= e_{1},e_{2},\dots,e_{n}\in\altpath{F,G}\right\}
\end{array} 
$$
\end{definition}

\subsection{Measurement, Proofs and Types}

We now recall the notion of measurement. Since in the specific case that will be of interest to us, i.e. when restricting to the microcosms considered in this paper, the expression of the measurement can be simplified, we only give this simple expression and point the curious reader to earlier work for the general definition \cite{seiller-goig}.

\begin{definition}\label{def:measurement}
The measurement between two graphings (realised by measure-preserving maps) is defined as 
$$\meas[]{F,G}=\sum_{\pi\in\repcirc{F,G}} \int_{\supp{\pi}} \frac{m(\omega(\pi)^{\rho_{\phi_{\pi}}(x))}}{\rho_{\phi_{\pi}}(x)}d\lambda(x)$$
where $\rho_{\phi_{\pi}(x)}=\inf\{n\in\naturalN~|~ \phi_{\pi}^{n}(x)=x\}$ (here $\inf\emptyset=\infty$), and the support 
$\supp{\pi}$ of $\pi$ is the set of points $x$ belonging to a finite orbit \cite[Definition 41]{seiller-goig}.\end{definition}

The measurement is used to define linear negation. But first, let us recall the notion of \emph{project} which is the semantic equivalent of \emph{proofs}.

\begin{definition}
A project of support $V$ is a pair $(a,A)$ of a real number $a$ and a finite formal sum $A=\sum_{i\in I} \alpha_{i}A_{i}$ where for all $i\in I$, $\alpha_{i}\in\realN$ and $A_{i}$ is a graphing of support $V$.
\end{definition}

We can then define an \emph{orthogonality relation} on the set of projects. Orthogonality captures the notion of linear negation and somehow translates the correctness criterion for proof nets. Its definition is based on the measurement defined above, extended to formal weighted sums of graphings by \enquote{linearity} \cite{seiller-goiadd,seiller-goig}.

\begin{definition}
Two projects $(a,A)$ and $(b,B)$ are orthogonal -- written $(a,A)\poll{}(b,B)$ -- when they have equal support and $\meas[]{(a,A),(b,B)}\neq 0,\infty$. We also define the orthogonal of a set $E$ as $E^{\pol}=\{(b,B): \forall (a,A)\in A, (a,A)\poll{}(b,B)\}$ and write $E^{\pol\pol}$ the double-orthogonal $(E^{\pol})^{\pol}$.
\end{definition}

Orthogonality allows for a definition of types. In fact the models are defined based on two notions of types --\emph{conducts} and \emph{behaviours} \cite{seiller-goiadd}. Conducts are simple to define but while their definition is enough to define a model of multiplicative linear logic, dealing with additives requires the more refined notion of \emph{behaviour}. 

\begin{definition}
A \emph{conduct} of support $V^{A}$ is a set $\cond{A}$ of projects of support $V^{A}$ such that $\cond{A}=\cond{A}^{\pol\pol}$. A \emph{behaviour} is a conduct such that for all $(a,A)$ in $\cond{A}$ (resp. $\cond{A}^{\pol}$) and for all $\lambda\in\realN$, $(a,A+\lambda\emptyset)$ belongs to $\cond{A}$ (resp. $\cond{A}^{\pol}$) as well. If both $\cond{A}$ and $\cond{A}^{\pol}$ are non-empty, we say $\cond{A}$ is \emph{proper}.
\end{definition}

Conducts provide a model of Multiplicative Linear Logic. The connectives $\otimes,\multimap$ are defined as follows: if $\cond{A}$ and $\cond{B}$ are conducts of disjoint supports $V^{A}, V^{B}$, i.e. $V^{A}\cap V^{B}$ is of null measure, then:
\[
\begin{array}{rcl}
\cond{A\otimes B}&=&\{\de{a\plug b}~|~\de{a}\in\cond{A},\de{b}\in\cond{B}\}^{\pol\pol}\\
\cond{A\multimap B}&=&\{\de{f}~|~\forall \de{a}\in\cond{A},\de{f\plug b}\in\cond{B}\}.
\end{array}
\]
However, to define additive connectives, one has to restrict the model to behaviours. In this paper, we will deal almost exclusively with proper behaviours. Based on the following proposition, we will therefore consider mostly projects of the form $(0,L)$ which we abusively identify with the underlying sliced graphing $L$. Moreover, we will use the term \enquote{behaviour} in place of \enquote{proper behaviour}.

\begin{proposition}[{\cite[Proposition 60]{seiller-goiadd}}]
If $\cond{A}$ is a proper behaviour, $(a,A)\in\cond{A}$ implies $a=0$.
\end{proposition}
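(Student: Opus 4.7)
The plan is to argue by contradiction, using the interplay of the three ingredients at hand: orthogonality (any pairing $\cond{A}\times\cond{A}^{\pol}$ measures to a finite non-zero real), properness (so $\cond{A}^{\pol}$ is non-empty), and the behaviour closure $(c,C)\mapsto(c,C+\lambda\emptyset)$ on both $\cond{A}$ and $\cond{A}^{\pol}$.

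Suppose, for contradiction, that $(a,A)\in\cond{A}$ with $a\neq 0$. By properness, pick any $(b,B)\in\cond{A}^{\pol}$ and set $v:=\meas[]{(a,A),(b,B)}$, which by orthogonality is a finite non-zero real. Applying the behaviour closure to $\cond{A}^{\pol}$, the whole family $\{(b,B+\mu\emptyset)\}_{\mu\in\realN}$ lies in $\cond{A}^{\pol}$, and each of its elements must be orthogonal to $(a,A)$:
\[
\meas[]{(a,A),(b,B+\mu\emptyset)}\notin\{0,\infty\}\quad\text{for every }\mu\in\realN.
\]

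The next step is to unfold the left-hand side. The measurement of projects is, by the remark preceding the orthogonality definition, the bilinear extension of $\meas[]{\cdot,\cdot}$ over formal sums, combined with an additive contribution coming from the wagers. Consequently the quantity above presents itself as an affine function of $\mu$,
\[
\mu\longmapsto v+\mu\cdot c,
\]
whose slope $c$ collects the pairing of $(a,A)$ against the empty-shift project $(0,\emptyset)$. The explicit formula for the project measurement from Seiller's earlier IG papers then shows that this slope is a non-zero multiple of the wager $a$: the very purpose of including the $+\lambda\emptyset$ closure in the definition of behaviour is to isolate the wager contribution in the orthogonality value. If $a\neq 0$, the slope $c$ is non-zero, the affine map $\mu\mapsto v+\mu c$ is a bijection of $\realN$, and it must hit $0$ at $\mu=-v/c$, contradicting the orthogonality condition on the whole $\mu$-parametrised family. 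Hence $a=0$.

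The main obstacle is the middle step: verifying that the slope $c$ is indeed a non-zero multiple of $a$ rather than a quantity purely depending on $(b,B)$. This requires carefully unwinding the exact formula for the project measurement from Seiller's earlier IG papers, where the empty-shift $\emptyset$ is engineered so that its pairing with $(a,A)$ couples directly to the wager $a$. Once this coupling is made explicit, the contradiction is immediate and the wager of every element of a proper behaviour is forced to vanish.
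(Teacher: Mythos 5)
Your argument is correct and is essentially the proof of the cited result: properness supplies some $(b,B)\in\cond{A}^{\pol}$, the behaviour closure puts the whole family $(b,B+\mu\emptyset)$ in $\cond{A}^{\pol}$, and the measurement against $(a,A)$ is affine in $\mu$, so a non-zero wager would let you choose $\mu$ to make it vanish, contradicting orthogonality. The step you flag as the main obstacle is immediate from the definition of the measurement of projects: the wager $a$ is multiplied by the total coefficient mass of the other project's formal sum, while the empty graphing contributes no cycles, so the slope is exactly $a$ (not merely a non-zero multiple of it).
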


Finally, let us mention the fundamental theorem for the interaction graphs construction in the restricted case we just exposed\footnote{The general construction allows for other sets of weights as well as whole families of measurements \cite{seiller-goig}.}.

\begin{theorem}[{\cite[Theorem 1]{seiller-goig}}]\label{thm:mallmodels}
For any microcosm $\microcosm{m}$, the set of behaviours provides a model of Multiplicative-Additive Linear Logic (\MALL) without multiplicative units.
\end{theorem}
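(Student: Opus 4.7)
The plan is to follow the standard program for showing that a $\ast$-autonomous category with additives arises from an orthogonality (double negation) construction over a dynamical core. The dynamical core here is the execution $\plug$ on graphings together with the measurement bracket $\meas[]{-,-}$ that defines orthogonality.

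First I would establish the \emph{associativity of execution}: for graphings $F,G,H$ whose supports match up so that the pairwise executions are defined, $(F\plug G)\plug H \sim F\plug(G\plug H)$. This is a combinatorial/measure-theoretic fact about alternating paths: one sets up a bijection between alternating paths used in $(F\plug G)\plug H$ and those used in $F\plug(G\plug H)$, checking that sources, weights, and realisers transport through. Since both executions are defined by path concatenation along $\altpath{\cdot,\cdot}$, the bijection is essentially \enquote{forget the bracketing of the alternation.} One must check equality up to the refinement equivalence defining graphings (not strict equality of representatives), so the argument lives at the level of $\sim$-classes.

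Second, and this is the technical core, I would prove the \emph{trefoil property for measurement}: $\meas[]{F\plug G,H}=\meas[]{F,G\plug H}$ whenever supports are arranged so that both sides make sense. The proof analyses finite orbits of $\phi_\pi$ for composed paths $\pi$: each cycle in one execution corresponds to a unique cycle in the other, with the same realiser and weight contributions (via the formula of \autoref{def:measurement}), after which the integrand matches pointwise almost everywhere. From the trefoil property one immediately gets the \emph{adjunction for orthogonality}: $F\plug G\poll H$ iff $F\poll G\plug H$, and hence the definition of $\cond{A\multimap B}$ is consistent with its bi-orthogonal closure. Standard manipulations then verify that the class of conducts, equipped with $\otimes$, $\multimap$ (and the induced $\parr$ via double negation), forms a $\ast$-autonomous category without units, giving a sound model of MLL without multiplicative units.

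Third, I would move to additives. A project being a formal weighted sum $(a,\sum_i\alpha_i A_i)$ gives the slicing needed for additives: define $\cond{A\with B}=\{(a,A\plug \pi_0+B\plug \pi_1)\mid \dots\}^{\pol\pol}$ using disjoint multiplexing on a boolean slice, and $\cond{A\oplus B}$ as the dual. The behaviour condition -- stability under addition of $\lambda\emptyset$ -- is exactly what makes the inclusions $\cond{A\with B}\subseteq \cond{A}$ (and its dual) sound: without this closure, a slice on only one side could be distinguished from the \enquote{$\emptyset$} choice on the other. I would verify that if $\cond{A},\cond{B}$ are (proper) behaviours with equal support, then so is $\cond{A\with B}$, and that the expected MALL isomorphisms/provabilities (distributivity $\cond{A\otimes(B\oplus C)}\leftrightarrow\cond{(A\otimes B)\oplus(A\otimes C)}$, etc.) hold, which again reduces to checking orthogonality class-wise on slices.

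The hard part will unambiguously be the trefoil property, since this is the place where the measure-theoretic content of graphings actually enters: one must show that the sum over alternating-path cycles is intrinsic to the pair of graphings being composed, independent of where one cuts the composition. Everything upstream (associativity of $\plug$) and everything downstream (construction of connectives, behaviour closures) is either bookkeeping with refinements or standard orthogonality-category arithmetic once the dynamical invariant is in place.
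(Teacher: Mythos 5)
You should first note that the paper does not prove this theorem at all: it is imported from the Interaction Graphs series (\cite[Theorem 1]{seiller-goig}), so your proposal can only be compared with the proof given there. Its overall architecture — associativity of execution, a trefoil-type identity for the measurement, conducts by bi-orthogonal closure, behaviours to accommodate the additives — is indeed the architecture of the cited proof.

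However, your statement of the key lemma is incorrect as it stands, and the inaccuracy hides exactly the mechanism that makes the model work. The trefoil property is not the two-term commutation $\meas[]{F\plug G,H}=\meas[]{F,G\plug H}$; the correct statement is the cyclic three-term identity
\[ \meas[]{F,G\plug H}+\meas[]{G,H} \;=\; \meas[]{G,H\plug F}+\meas[]{H,F} \;=\; \meas[]{H,F\plug G}+\meas[]{F,G}, \]
reflecting the fact that $\meas[]{F,G\plug H}$ counts the alternating cycles visiting $F$ but misses those internal to the pair $(G,H)$, while $\meas[]{H,F\plug G}$ misses those internal to $(F,G)$. Your two-term version holds only when these internal contributions vanish — the special situation exploited later in this paper (proof of Proposition \ref{orthogonalityandcycles}, where the supports of $\oc W$ and the test are disjoint) — but not in general, so the adjunction $F\plug G\poll H \Leftrightarrow F\poll G\plug H$ does not follow \enquote{immediately} at the level of graphings: it fails whenever internal cycles appear. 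The cited construction repairs this using precisely the component of projects your sketch never touches, namely the real number $a$ in $(a,A)$: execution of projects adds the correction term ($\meas[]{F,G}$, weighted by the coefficients of the formal sums) to this wager, and the measurement of projects is extended accordingly, so that the three-term identity becomes a genuine adjunction for the orthogonality of projects; this is what makes $\cond{A\multimap B}$ coincide with $(\cond{A}\otimes\cond{B}^{\pol})^{\pol}$ and $\otimes$ associative on conducts. Relatedly, the additive step is not mere bookkeeping: the closure of behaviours under adding $\lambda\emptyset$ is needed precisely because slices carrying only a wager or an empty graphing are invisible to the measurement, and without restricting conducts to behaviours the additive connectives defined on weighted sums of slices fail the required MALL properties. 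So the route is the right one, but as written its central step would fail; you need the corrected trefoil identity together with the wager bookkeeping on projects to carry it out.
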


\subsection{Integers, Machines and Complexity}

We now recall some definitions introduced in previous work by the author characterising complexity classes by use of graphings \cite{seiller-goinda}; interested readers will find there (and in some references therein \cite{seiller-conl,seiller-lsp}) more detailed explanations of -- and motivations for -- the definitions. In particular the representation of binary words is related to 
\[ \ListTypeELL := \forall X~ \oc(X\multimap X)\multimap \oc(X\multimap X)\multimap \oc(X\multimap X), \] 
the type of binary lists in Elementary Linear Logic \cite{LLL,danosjoinet}. 

\begin{notation}
To ease notations, we only consider words over $\Sigma=\{0,1\}$ in this paper. We write $\ext{\Sigma}$ the set $\{0,1,\star\}\times\{\In,\Out\}$. We also denote by $\vertices{\Sigma}$ the set $\ext{\Sigma}\cup\{\ttrm{a},\ttrm{r}\}$, where $\ttrm{a}$ (resp. $\ttrm{r}$) stand for $\accept$ (resp. $\reject$). 

Initial segments of the natural numbers $\{0,1,\dots,n\}$ are denoted $[n]$. Up to renaming, all statesets can be considered to be of this form.
\end{notation}

\begin{notation}
As in the previous paper \cite{seiller-goinda}, we fix once and for all an injection $\Psi$ from the set $\vertices{\Sigma}$ to intervals in $\realN$ of the form $[k,k+1]$ with $k$ an integer. For all $f\in\vertices{\Sigma}$ and $Y$ a measurable subset of $[0,1]^{\naturalN}$, we denote by $\bracket{f}_{Y}^{Z}$ the measurable subset $\Psi(f)\times Y\times Z$ of $\measure{X}$, where $Y\subset[0,1]^{\naturalN}$ and $Z\subset\{\star,0,1\}^{\naturalN}$. If $Y=[0,1]^{\naturalN}$ (resp. $Z=\{\star,0,1\}^{\naturalN}$), we omit the subscript (resp. superscript). The notation is extended to subsets $S\subset\vertices{\Sigma}$ by $\support{S}=\cup_{f\in S}\support{f}$ (a disjoint union).
\end{notation}

Given a word $\word{w}=\star a_{1}a_{2}\dots a_{k}$, we denote $\graphtw{w}$ the graph with set of vertices $V^{\graphtw{w}}\times D^{\graphtw{w}}$, set of edges $E^{\graphtw{w}}$, source map $s^{\graphtw{w}}$ and target map $t^{\graphtw{w}}$ respectively defined as follows:
\begin{equation*}
\begin{array}{l}
V^{\graphtw{w}}=\ext{\Sigma} \hspace{2em} D^{\graphtw{w}}=[k] \hspace{2em}  E^{\graphtw{w}}=\{r,l\}\times[k]\\
\begin{array}{rcl}
s^{\graphtw{w}}&=&(r,i)\mapsto (a_{i},\Out,i)\\
		&&(l,i)\mapsto (a_{i}, \In,i)\\
t^{\graphtw{w}}&=&(r,i)\mapsto (a_{i+1},\In,i+1 \textnormal{ mod } k+1)\\
		&&(l,i)\mapsto (a_{i-1}, \Out,i-1 \textnormal{ mod } k+1)
\end{array}
\end{array}
\end{equation*}
This graph is the discrete representation of $\word{w}$. Detailed explanations on how these graphs relate to the proofs of the sequent $\vdash \ListTypeELL$ can be found in earlier work \cite{seiller-phd,seiller-conl}.

\begin{definition}
Let $\word{w}$ be a word $\word{w}=\star a_{1}a_{2}\dots a_{k}$ over the alphabet $\Sigma$. We define the \emph{word graphing} $\graphingtw{w}$ of support $\wordsupport$ and stateset $D^{\graphtw{w}}$ by the set of edges $E^{\graphtw{w}}$ and for all edge $e$:
\[\{(\bracket{f},i,j,\phi_{f,i}^{g,j},1)\mid e\in E^{\graphtw{w}}\}, \]
where $s^{\graphtw{w}}(e)=(f,i)$, $t^{\graphtw{w}}(e)=(g,j)$, and $\phi_{f,i}^{g,j}:(\bracket{f},x,i)\mapsto (\bracket{g},x,j)\}$.
\end{definition}

\begin{notation}
We write $\reptw{w}$ the set of word graphings for $\word{w}$. It is defined as the 
set of graphings obtained by renaming the stateset $D^{\graphtw{w}}$ w.r.t. an injection $[k]\rightarrow [n]$.
\end{notation}

\begin{definition}
Given a word $\word{w}$, a \emph{representation of $\word{w}$} is a graphing $\oc L$ where $L$ belongs to $\reptw{w}$. The set of representations of words in $\Sigma$ is denoted $\twwprojects$, the set of representations of a specific word $\word{w}$ is denoted $\repany{w}$. 

We then define the conduct $\oc\ListType=\cond{(\twwprojects)^{\pol{}\pol{}}}$.
\end{definition}

\begin{definition}
We define the (unproper) behaviour $\NBool$ as $\cond{T}_{\resultsupport}$, where for all measurable sets $V$ the behaviour $\cond{T}_{V}$ is defined as the set of all projects of support $V$.
For all microcosms $\microcosm{m}$, we define $\pred{m}$ as the set of $\microcosm{m}$-graphings in $\oc\ListType\multimap\NBool$.
\end{definition}

\begin{definition}
An $\microcosm{m}$-graphing $G$ is \emph{finite} when it has a representative $H$ whose set of edges $E^{H}$ is finite.
\end{definition}

\begin{definition}
A \emph{nondeterministic predicate $\microcosm{m}$-machine} over the alphabet $\Sigma$ is a finite $\microcosm{m}$-graphing belonging to $\pred{m}$ with all weights equal to $1$.
\end{definition}

The computation of a given machine given an argument is represented by the \emph{execution}, i.e. the computation of paths defined in \autoref{def:execution}. The result of the execution is an element of $\NBool$, i.e.\ somehow a generalised boolean value\footnote{If one were working with \enquote{deterministic machines} \cite{seiller-towards}, it would belong to the subtype $\Bool$ of booleans.}.

\begin{definition}[Computation]
Let $M$ be a $\microcosm{m}$-machine, $\word{w}$ a word over the alphabet $\Sigma$ and $\oc L\in\oc\ListType$. The \emph{computation} of $\de{M}$ over $\oc L$ is defined as the graphing $M\plug \oc L\in\NBool$.
\end{definition}

\begin{definition}[Tests]\label{def:tests}
A \emph{test} is a family $\testfont{T}=\{\de{t}_{i}=(t_{i},T_{i})~|~i\in I\}$ of projects of support $\resultsupport$.
\end{definition}

We now want to define the language characterised by a machine. For this, one could consider \emph{existential} $\mathcal{L}_{\exists}^{\testfont{T}}(M)$ and  \emph{universal} $\mathcal{L}_{\forall}^{\testfont{T}}(M)$ languages for a machine $M$ w.r.t. a test $\testfont{T}$:
$$
\begin{array}{rcl}
\mathcal{L}_{\exists}^{\testfont{T}}(M)&=&\{\word{w}\in\Sigma^{\ast}~|~\forall \de{t}_{i}\in\testfont{T}, \exists \de{w}\in\repany{w}, M\plug \de{w}\poll{} \de{t}_{i}\}\\
\mathcal{L}_{\forall}^{\testfont{T}}(M)&=&\{\word{w}\in\Sigma^{\ast}~|~\forall \de{t}_{i}\in\testfont{T}, \forall \de{w}\in\repany{w}, M\plug \de{w}\poll{} \de{t}_{i}\}
\end{array}
$$

We now introduce the notion of uniformity, which describes a situation where both definitions above coincide. This collapse of definitions is of particular interest because it ensures that both of the following problems are easy to solve:
\begin{itemize}[nolistsep,noitemsep]
\item whether a word belongs to the language: from the existential definition one only needs to consider one representation of the word;
\item whether a word does not belong to the language: from the universal definition, one needs to consider only one representation of the word.
\end{itemize}

\begin{definition}[Uniformity]
Let $\microcosm{m}$ be a microcosm. The test $\testfont{T}$ is said \emph{uniform} w.r.t. $\microcosm{m}$-machines if for all such machine $M$, and any two elements $\de{w},\de{w'}$ in $\repany{w}$:
$$M\plug \de{w}\in \testfont{T}^{\pol} \text{ if and only if }M\plug \de{w}'\in \testfont{T}^{\pol}$$
Given a $\microcosm{m}$-machine $M$, we write in this case $\mathcal{L}^{\testfont{T}}(M)=\mathcal{L}_{\exists}^{\testfont{T}}(M)=\mathcal{L}_{\forall}^{\testfont{T}}(M)$.
\end{definition}

\begin{definition}
For $U\subset\mathbf{X}$, we define $\identity[U]$ as the graphing with a single edge and stateset $[0]$: $\{(\support{\ttrm{r}},0,0,x\mapsto x,1\cdot\mathbf{1})\}$.
 \end{definition}

\begin{definition}
We define the test $\testdetneg$ as the set consisting of the projects $\de{t}^{-}_{\zeta}=(\zeta,\identity[\support{\ttrm{r}}])$, where $\zeta\neq0$.
\end{definition}

\begin{proposition}
The test $\testdetneg$ is uniform w.r.t. $\microcosm{m}_{\infty}$-machines.
\end{proposition}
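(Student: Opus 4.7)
My plan is to exploit the fact that two representations $\de{w}=\oc L$ and $\de{w}'=\oc L'$ of the same word $\word{w}$ differ only by injective relabelings of their stateset, while all data used by the measurement (supports in $\measure{X}$, realisers in $\microcosm{m}_{\infty}$, and weights) are completely decoupled from the stateset. First, I would fix a nondeterministic predicate $\microcosm{m}_{\infty}$-machine $M$, a word $\word{w}$, and two representations $\oc L, \oc L'$ arising from injections $\iota,\iota':[k]\to[n]$. A look at the definition of the word graphing $\graphingtw{w}$ reveals that the realiser $\phi_{f,i}^{g,j}$ of an edge depends only on the pair of vertices $f,g\in\ext{\Sigma}$ and acts on $\measure{X}$ alone, never on the stateset. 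Hence $L$ and $L'$ are in edge-wise bijection with identical supports, identical realisers and identical weights, differing only in the relabeling $\iota'\circ\iota^{-1}$ of source/target states.

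Next I would lift this bijection through the exponential $\oc$ — which only replicates the underlying graphing and leaves state structure unchanged — to obtain an analogous bijection between the edges of $\oc L$ and $\oc L'$. Because the realisers coming from $M$ also lie in $\microcosm{m}_{\infty}$ and ignore the state component, this edge bijection extends to a bijection between the alternating paths of \autoref{def:execution}: an alternating path $\pi$ in $M\plug\oc L$ corresponds to the path $\pi'$ obtained by substituting each $\oc L$-edge with its $\oc L'$-counterpart, and well-formedness is preserved since compatibility of consecutive edges is tested on $\measure{X}$, never on states. Corresponding paths therefore share support, realiser and weight.

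Finally, since $\de{t}^{-}_{\zeta}$ has a single identity edge on $\support{\ttrm{r}}$, the alternating cycles entering $\meas[]{M\plug\de{w},\de{t}^{-}_{\zeta}}$ are in bijection with those entering $\meas[]{M\plug\de{w}',\de{t}^{-}_{\zeta}}$ with identical realisers, supports and weights. The measurement formula of \autoref{def:measurement} is a sum of integrals over $\measure{X}$ depending only on these data, so both measurements coincide for every $\zeta\neq 0$, and orthogonality to the whole test $\testdetneg$ holds for $M\plug\de{w}$ if and only if it holds for $M\plug\de{w}'$. The step I expect to require the most care is the verification that $\oc$ really commutes with renaming of the stateset: this should follow from the replication being indexed by the $[0,1]^{\naturalN}$ coordinate (as in the author's earlier constructions) rather than by any state-dependent mechanism, but making this rigorous requires unfolding the definition of the exponential and re-checking that no state-dependent artefacts are introduced. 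The remainder is routine bookkeeping on cycles.
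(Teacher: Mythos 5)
The crux of your argument is the claim that $\oc$ ``only replicates the underlying graphing and leaves state structure unchanged'', so that two elements of $\repany{w}$ would have identical supports and realisers and differ only by a relabelling of states. This is precisely where the proposal breaks down: in the Interaction Graphs construction the exponential does not carry the stateset along as an external label, it \emph{internalises} the dialect into the space. This is visible in the proof of Lemma \ref{technicallemma} (and in the set $Y=[0,\frac{1}{\lg(\word{w})}]^{k}\times[0,1]^{\naturalN}$ of \autoref{prop:generalised46}): the realisers of $\oc\graphingtw{w}$ act as permutations of a decomposition of a copy of $[0,1]$ into $k$ intervals, i.e.\ the dialect of $\graphingtw{w}$ has become spatial. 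A representation built from an injection $[k]\to[n]$ is accordingly encoded through a decomposition into $n$ intervals, so two representations of the same word are genuinely different measurable actions, with different supports and different realisers. Hence there is no edge-wise bijection preserving the data entering \autoref{def:measurement}, and the path bijection you build on top of it does not exist at this level; your observation that, \emph{before} applying $\oc$, renaming the stateset affects neither supports nor realisers nor weights is correct but does not survive the exponential.

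This is also why uniformity is a nontrivial property relative to the microcosm: it holds for $\microcosm{m}_{\infty}$-machines because their realisers are translations of the $\realN$ component and permutations of whole copies of $[0,1]$, so they act identically on every interval of the subdivision and cannot detect which encoding of the stateset was chosen; a machine allowed finer maps on $[0,1]$ could perfectly well separate two representations of the same word. An argument that, like yours, never uses the restriction to $\microcosm{m}_{\infty}$ in an essential way would establish uniformity of any test with respect to any class of machines, which is too strong. The intended proof (recalled from the previous paper and generalised here through \autoref{prop:generalised46} and Lemma \ref{technicallemma}) goes the other way around: one reduces the interaction of a machine $M$ with any $\oc L$, $L\in\reptw{w}$, to alternating paths in finite graphs whose structure is independent of the chosen injection -- equivalently, to computation traces -- and deduces that the existence of alternating cycles through $\support{\ttrm{r}}$ with the test $\testdetneg$ does not depend on the chosen representation, which is exactly the uniformity statement.
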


\begin{definition}
We define the complexity class $\predcondet{m}$ as the set
\[ \{\mathcal{L}^{\testdetneg}(M)\mid M\text{ $\microcosm{m}$-machine} \}. \]
\end{definition}

We recall the main theorem of the author's previous paper \cite{seiller-goinda}, and refer to \autoref{def:cc} for the definition of the characterised complexity classes.
\begin{theorem}
For all  $i\in\naturalN^{\ast}\cup\{\infty\}$, the class $\predcondet{{m}_{i}}$ is equal to \cctwconfa{i}. As particular cases, $\predcondet{{m}_{1}}=\Regular$ and $\predcondet{{m}_{\infty}}=\coNLogspace$.
\end{theorem}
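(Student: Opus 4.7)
I would prove both inclusions separately, by constructing explicit translations between $\microcosm{m}_{i}$-machines and two-way $i$-head nondeterministic finite automata (whose complement semantics yields the co-nondeterministic class $\cctwconfa{i}$).

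For the \emph{completeness} direction $\cctwconfa{i}\subseteq\predcondet{m_{i}}$, I would start from a two-way $i$-head nondeterministic automaton $A$ recognising the complement language, and build a finite $\microcosm{m}_{i}$-graphing $M$ whose edges directly encode the transitions of $A$. The integer coordinate $x$ of $\measure{X}$ is used to track the finite control state of $A$ together with the index of the head currently being moved, which only requires the translations $\ttrm{t}_{z}\in\microcosm{m}_{i}$. Movement of the $k$-th head along the input is implemented by first applying a permutation $\ttrm{p}_{\sigma}\in\microcosm{m}_{i}$ swapping the $k$-th coordinate of $\seq{s}$ into position $1$, then traversing an edge of the word graphing $\graphingtw{w}$ coming from $\oc L$, then swapping back. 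Letter reading at the position of head $k$ is encoded through the sources $\bracket{a,\In}$ and $\bracket{a,\Out}$ in the style of earlier constructions.

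For the \emph{soundness} direction $\predcondet{m_{i}}\subseteq\cctwconfa{i}$, I note that every edge realiser of a finite $\microcosm{m}_{i}$-machine $M$ is, by definition, a composition of integer translations and permutations acting only on the first $i$ entries of $\seq{s}$. Hence the computational content of each edge amounts to updating a finite control state and moving among $i$ distinguished pointers on the input. Every alternating path $\pi\in\altpath{M,\oc L}$ then corresponds to a single run of a two-way $i$-head nondeterministic automaton on $\word{w}$, and finiteness of $M$ ensures that the simulating automaton has finite control. To close the correspondence, I would use the test $\testdetneg$: orthogonality with $\de{t}^{-}_{\zeta}$ requires the measurement $\meas[]{M\plug\oc L,\de{t}^{-}_{\zeta}}$ to be nonzero and finite, and since the realiser underlying $\identity[\support{\ttrm{r}}]$ has trivial cycles on every point, this reduces to an aggregate condition on the finite-orbit paths of $M\plug\oc L$ reaching $\support{\ttrm{r}}$. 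One then checks that this condition captures universal rejection across nondeterministic branches, i.e.\ co-nondeterministic acceptance of the complement of the language recognised by $A$.

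The \textbf{main obstacle} lies in the delicate correspondence between, on the one hand, finite orbits contributing to $\supp{\pi}$ for $\pi\in\altpath{M,\oc L}$ and, on the other, genuine runs of the simulated $i$-head automaton on $\word{w}$. One must ensure that the promotion $\oc L$ supplies sufficiently many independent copies of $\graphingtw{w}$ for the $i$ heads to move without interference, and simultaneously that uniformity of $\testdetneg$ with respect to $\microcosm{m}_{i}$-machines collapses $\mathcal{L}_{\exists}^{\testdetneg}(M)$ and $\mathcal{L}_{\forall}^{\testdetneg}(M)$. This uniformity is precisely where the microcosm restriction $\microcosm{m}_{i}$ does the substantive work, forcing the homogeneous behaviour across representatives in $\repany{w}$ that is characteristic of the advertised class $\cctwconfa{i}$.
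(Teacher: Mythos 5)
Your plan follows the same two-way simulation that this theorem rests on (it is recalled from the author's previous paper, whose machinery is generalised here): encode an automaton's transitions as a finite $\microcosm{m}_{i}$-graphing, extract an automaton from any finite $\microcosm{m}_{i}$-machine, let the test $\testdetneg$ detect computations returning to $\support{\ttrm{r}}$, and use uniformity to collapse $\mathcal{L}_{\exists}$ and $\mathcal{L}_{\forall}$. Two points, however, are genuinely problematic as written. First, you store the finite control of $A$ (and the index of the moving head) in the integer coordinate $x$. A machine contributing to $\predcondet{m_{i}}$ must be a graphing in $\oc\ListType\multimap\NBool$, whose support is fixed: the only integer intervals available are the images under $\Psi$ of $\vertices{\Sigma}$. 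There is no room in $x$ for control states, and translations $\ttrm{t}_{z}$ cannot create any without leaving the type. The paper's encoding instead places the state, the current permutation of the heads and the last symbols read in the stateset (dialect), reserving $x$ for the symbol/direction interface and accept/reject; your encoding needs this repair to even be a predicate of the right type.

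Second, in the direction $\predcondet{m_{i}}\subseteq\text{\cctwconfa{i}}$ you claim that \emph{by definition} each edge of a finite $\microcosm{m}_{i}$-machine "amounts to updating a finite control state and moving among $i$ pointers". That is exactly what has to be proved, not a definition: edges have arbitrary measurable sources, and after promotion the head positions live in continuous coordinates of $[0,1]^{\naturalN}$, not in a finite set. The paper's mechanism (the Technical Lemma, generalising Lemma 4.14 of the previous paper) is a discretisation: for an input of length $k$, all realisers occurring in the interaction of $M$ with $\oc\graphingtw{w}$ act as permutations of the $N$-cubes of side $1/k$, so the execution is equivalent to alternating paths in finite graphs $\bar{M}$ and $\graphtw{w}$, from which the simulating multihead automaton is read off; orthogonality to $\testdetneg$ then corresponds to the absence of alternating cycles through $\support{\ttrm{r}}$. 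You correctly identify the orbit/run correspondence and uniformity as the crux, but you give no mechanism to discharge it, and the cube discretisation is that mechanism. A minor additional slip: the condition enforced by $\testdetneg$ is "no branch reaches $\ttrm{r}$" (co-nondeterministic acceptance), not "universal rejection across branches"; these differ in the presence of non-halting branches, so the acceptance bookkeeping in your completeness direction should be stated in the former form.
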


\subsection{Characterising $\NLogspace$}

The starting point of this work was the realisation that one can define another test $\testdet$ that allows to capture the notion of acceptance in $\NLogspace$. Based on this idea, and using technical lemmas from the previous paper, we can characterise easily the hierarchy of complexity classes defined by $k$-head non-deterministic automata with the standard non-deterministic acceptance condition (i.e. there is at least one accepting run).

\begin{definition}
For $U\subset\mathbf{X}$, we define $\halfidentity[U]$ as the graphing with a single edge and stateset $[0]$: $\{(\support{\ttrm{r}},0,0,x\mapsto x,\frac{1}{2}\cdot\mathbf{1})\}$.
 \end{definition}
 
\begin{definition}
The test $\testdet$ is defined as the family:
\[ \{ (0,\halfidentity[\support{\ttrm{a}}_{[0,\frac{1}{n}]^{n}\times[0,1]^{\naturalN}}])\mid n\in\naturalN \}. \]
\end{definition}

\begin{definition}
We define the complexity class $\predndet{m}$ as the set
\[ \{\mathcal{L}^{\testdet}(M)\mid M\text{ $\microcosm{m}$-machine in $\nmodel{\{1\}}{m}$} \}. \]
\end{definition}

This test does indeed provide the right characterisation. In fact, the sole element $(0,\halfidentity[\support{\ttrm{a}}])$ is enough to obtain soundness, since the result relies on a result from the author's previous paper \cite[Proposition 46]{seiller-goinda} (which is not stated here, as it is generalised by \autoref{prop:generalised46} below). Thus, a $k$-heads two-way automaton $\automaton{M}$ accepts a word $\word{w}$ if and only if there exists at least one alternating path between the graphing translation $\autograph{M}$ of $M$ and the word representation $\oc \graphingtw{w}$ whose source and target is $\support{\ttrm{a}}_{Y}$. Thus, $\automaton{M}$ accepts $\word{w}$ if and only if there are alternating cycles between $\autograph{M}\plug \oc \graphingtw{w}$ and $\halfidentity[\support{\ttrm{a}}]$, i.e. if and only if $\meas[]{\autograph{M}\plug \oc \graphingtw{w},\halfidentity[\support{\ttrm{a}}]}\neq 0,\infty$.

However, the whole family of tests is needed to obtain completeness. Indeed, in the general case, it might be possible that a $\microcosm{m}_i$-machine $G$ passes the test $\{(0,\halfidentity[\support{\ttrm{a}}])\}$ by taking several paths through the execution $G\plug \oc \graphingtw{w}$ (hence creating a cycle of arbitrary length). In that  case, it is not clear that the existence of such a cycle can be decided with some automaton $M$. However, if $G\plug \oc \graphingtw{w}$ passes all tests in $\testdet$, it imposes the existence of a cycle of length 2 between $G\plug \oc \graphingtw{w}$ and $\halfidentity[\support{\ttrm{a}}]$, something that can be decided by an automaton. 

A simple adaptation of the arguments then provides a proof of the following. We omit the details for the moment, as the next sections will expose a generalisation of the technique that also applies to the probabilistic case and to automata with a pushdown stack. The definitions of the complexity classes involved in the statement are given in \autoref{def:cc}.

\begin{theorem}
For all $i\in\naturalN^{\ast}\cup\{\infty\}$, 
\[\predndet{m_{i}}=\text{\cctwnfa{i}}\hspace{2em} \predndet{n_{i}}=\text{\cctwnfastack{i}}.\]
In particular, $\predndet{m_{\infty}}=\Logspace$ and $\predndet{n_{\infty}}=4\Ptime$\footnote{Here we characterise $\Ptime$ and not $\NPtime$ as one may expect because non-determinism for pushdown machines do not add expressivity, as shown by Cook \cite{cookP} using memoization.} 
\end{theorem}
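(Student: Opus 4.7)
The plan is to establish each equality by two inclusions; I treat $\predndet{\microcosm{m}_i} = \cctwnfa{i}$ first and then obtain the pushdown variant by the same scheme applied to $\microcosm{n}_i$. For the soundness inclusion $\cctwnfa{i} \subseteq \predndet{\microcosm{m}_i}$, I would reuse the translation $\automaton{M} \mapsto \autograph{M}$ from a $k$-head non-deterministic two-way automaton to an $\microcosm{m}_k$-machine developed in the author's previous paper and already invoked for the $\predcondet{\cdot}$ characterisation. The text preceding the theorem already sketches the idea: an accepting run of $\automaton{M}$ on $\word{w}$ corresponds to an alternating path in $\autograph{M} \plug \oc \graphingtw{w}$ whose endpoints lie in $\support{\ttrm{a}}$; closing this path with the identity edge defining $\halfidentity$ yields an alternating cycle of length $2$ whose contribution to the measurement is a strictly positive and finite integral. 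Since the support of this cycle does not shrink with $n$, every test in $\testdet$ is passed, and uniformity (established as in the $\testdetneg$ case) closes the direction.

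The reverse inclusion $\predndet{\microcosm{m}_i} \subseteq \cctwnfa{i}$ is the technical heart and justifies introducing the whole family $\testdet$. I would prove, as a refinement of \autoref{prop:generalised46}, the following dichotomy: if $G$ is a $\microcosm{m}_i$-machine and $G \plug \oc \graphingtw{w}$ passes every test in $\testdet$, then there exists an alternating cycle of length exactly $2$ between $G \plug \oc \graphingtw{w}$ and $\halfidentity[\support{\ttrm{a}}]$ whose realiser acts as the identity on a positive-measure set. The argument exploits the shrinking family $[0,1/n]^n$: every realiser in $\microcosm{m}_i$ is generated by translations $\ttrm{t}_z$ on the $\realN$-component and permutations $\ttrm{p}_\sigma$ fixing all coordinates of $\seq{s}$ beyond index $i$. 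A cycle of length $\geq 3$ composes into a realiser that is either a nontrivial integer translation on $\realN$, periodic on a null set only, or a nontrivial permutation acting on the first $i$ coordinates of $\seq{s}$, incompatible with the $[0,1/n]^n$ restriction as soon as $n > i$. A diagonal argument across all $n$ then forces every cycle contributing to all measurements to be a length-$2$ identity cycle.

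Once this lemma is in place, completeness follows routinely: a $k$-head non-deterministic two-way automaton with $k = i$ can enumerate the finitely many edges of $G$, guess one with source in $\support{\ttrm{a}}$, and verify the existence of a matching word-graphing edge on the relevant point; the head positions are encoded by the active finite prefix of the $[0,1]^\naturalN$-coordinate through the $\Psi$ injection, and the identity condition on the length-$2$ cycle translates into the automaton's transition landing in an accepting configuration. For $\predndet{\microcosm{n}_i} = \cctwnfastack{i}$, the same proof applies once one observes that the additional generators $\pop, \pushzero, \pushone, \pushstar$ of $\microcosm{n}_i$ act only on the $\{\star,0,1\}^\naturalN$-coordinate and are faithfully simulated by a single pushdown stack; the extraction argument is unaffected since the tests constrain only coordinates of $\seq{s}$. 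The main obstacle throughout is the measure-theoretic extraction step: ruling out that longer alternating cycles combine across different tests to yield non-zero finite measurements without any length-$2$ cycle existing, which is precisely where the uniform shrinking of $[0,1/n]^n$ is indispensable.
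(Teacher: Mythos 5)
Your soundness direction is the paper's own: reuse the automaton-to-graphing translation and the correspondence between accepting runs and alternating paths ending in $\support{\ttrm{a}}$ (\autoref{prop:generalised46}), then close with the $\halfidentity$ edge. The gap is in your completeness lemma. You try to rule out alternating cycles of length $>2$ by arguing that their composite realiser is a nontrivial translation, a.e.\ aperiodic, or a nontrivial permutation of the first $i$ coordinates of $\seq{s}$ ``incompatible with $[0,1/n]^{n}$ as soon as $n>i$''. That dichotomy is false: a permutation of the first $i$ coordinates maps $[0,1/n]^{n}\times[0,1]^{\naturalN}$ onto itself for every $n\geqslant i$, and composites of edges of $G\plug\oc\graphingtw{w}$ can perfectly well be the identity on a positive-measure subset of $\support{\ttrm{a}}$ (two edges realised by mutually inverse permutations already do it), so length-$4$ identity cycles exist and nothing forces ``every contributing cycle'' to have length $2$. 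The paper never excludes long cycles; its mechanism is different and rests on the word, not only on the microcosm: once $\word{w}$ of length $k$ is fixed, all realisers occurring in $G\plug\oc\graphingtw{w}$ act as translations on the grid of cubes of side $1/k$ in the first $i$ copies of $[0,1]$ --- this is exactly the finite-graph reduction of \autoref{technicallemma} --- so orthogonality to the tests with $n\gg k$ forces any contributing cycle to pass through the \emph{first} cube $[0,1/k]^{i}$ on $\support{\ttrm{a}}$, and an edge of $G\plug\oc\graphingtw{w}$ from that cube to itself is then necessarily the identity on the $\realN\times[0,1]^{\naturalN}$ component, which is what produces the length-$2$ cycle. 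The shrinking of $[0,1/n]^{n}$ alone, without this cube granularity inherited from the word, gives you no such conclusion; your ``diagonal argument across all $n$'' has no content to diagonalise over.

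The final step is also thinner than it can afford to be. An edge of $G\plug\oc\graphingtw{w}$ is an arbitrarily long alternating path between $G$ and $\oc\graphingtw{w}$, so an automaton cannot just ``guess an edge of $G$ with source in $\support{\ttrm{a}}$ and verify a matching word-graphing edge''; it must simulate the finite graph $\bar{M}$ of \autoref{technicallemma} transition by transition, its $i$ heads tracking the current cube (head positions are encoded by the subdivision of the first $i$ copies of $[0,1]$ into $k$ intervals, not by $\Psi$, which only places the symbols of $\vertices{\Sigma}$ on $\realN$); this simulation is precisely where the number of heads is matched to the index $i$ of the microcosm, which your argument needs for the level-by-level equality and not merely for $i=\infty$. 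Finally, for $\microcosm{n}_{i}$ the remark that ``the extraction argument is unaffected since the tests constrain only coordinates of $\seq{s}$'' skips the real difficulty: one must ensure that along the extracted path the pushes and pops are consistent and the stack is emptied on acceptance, which the paper handles by weighting $\bar{M}$ in the monoid $\Theta$ and constraining the $\Theta$-weight of admissible paths; without that bookkeeping your simulating pushdown automaton would accept paths whose stack discipline is violated.
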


\section{Deterministic and Probabilistic Models}

\begin{definition}
A graphing $\graphing{G}=\{S^{G}_{e},\phi^{G}_{e},\omega^{G}_{e}~|~e\in E^{G}\}$ is \emph{deterministic} if all edges have weight equal to $1$ and the following holds:
\[ \mu\left(\left\{x\in \measure{X}~|~ \exists e,f\in E^{G}, e\neq f\text{ and }x\in S_{e}^{G}\cap S_{f}^{G}\right\}\right)=0 \]
\end{definition}

\begin{remark}
The notion is quite natural, and corresponds in the case of graphs to the. requirement the out-degree of all vertices to be less or equal to 1. 
\end{remark}

We now prove that the set of deterministic graphings is closed under composition, i.e. if $F,G$ are deterministic graphings, then their execution $F\plug G$ is again a deterministic graphing. This shows that the sets of deterministic and non-deterministic graphings define submodels of $\model{\Omega}{m}$.

\begin{lemma}
The set of deterministic graphings is closed under execution.
\end{lemma}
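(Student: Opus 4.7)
The plan is to verify the two defining conditions of determinism for $F\plug G$. Recall from \autoref{def:execution} that edges of $F\plug G$ are indexed by alternating paths $\pi=e_1\ldots e_n\in\altpath{F,G}$, with weight $\omega_\pi=\prod_i\omega_{e_i}$, realiser $\phi_\pi=\phi_{e_n}\circ\cdots\circ\phi_{e_1}$, and source $S^{\decoupe C}_\pi$. The weight condition is immediate: since each $\omega_{e_i}=1$ by determinism of $F$ and $G$, we have $\omega_\pi=1$.

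The heart of the argument is the source disjointness condition. Because $F$ and $G$ have countably many edges, so does $\altpath{F,G}$; by subadditivity of measure it suffices to prove that for any two distinct alternating paths $\pi=e_1\ldots e_n$ and $\pi'=e'_1\ldots e'_m$ the intersection $S^{\decoupe C}_\pi\cap S^{\decoupe C}_{\pi'}$ is null. My plan is a case analysis on the smallest index $k$ at which $\pi$ and $\pi'$ diverge. If $k\leqslant\min(n,m)$, then the shared prefix yields, for any $x$ in the intersection, a common image $y=\phi_{e_{k-1}}\circ\cdots\circ\phi_{e_1}(x)$ (with $y=x$ if $k=1$) lying simultaneously in $S_{e_k}$ and $S_{e'_k}$; the alternating structure forces $e_k$ and $e'_k$ to belong to the same graphing (namely the one opposite to $e_{k-1}$, or determined by whether $x\in V$ or $x\in W$ when $k=1$), and determinism of that graphing makes $S_{e_k}\cap S_{e'_k}$ null. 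Pulling back along the prefix composition, which is measure-preserving, yields the null intersection. If instead one path is a proper prefix of the other, say $n<m$ and $e_i=e'_i$ for $i\leqslant n$, then for $x\in S^{\decoupe C}_\pi$ the endpoint $\phi_\pi(x)$ lies outside $C$, whereas for $x\in S^{\decoupe C}_{\pi'}$ this same image must be an intermediate point of $\pi'$, hence in $S_{e'_{n+1}}\subseteq V^F\cap V^G$; since $V^F\cap V^G=C$ modulo the null set $V\cap W$ (by the disjoint unions $V^F=V\disjun C$ and $V^G=C\disjun W$), these conditions are incompatible up to a null set.

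The step I expect to be most delicate is the first case, as it relies on the realisers from our microcosms being measure-preserving, so that preimages of null sets remain null. This is standard in the interaction graphs framework -- it is precisely the property that makes the measurement of \autoref{def:measurement} well-defined -- and can be verified by direct inspection of the generators of $\microcosm{m}_i$ and $\microcosm{n}_i$ given earlier, all of which are measure-preserving. Once this is in hand, collecting the two cases concludes the proof: the weight condition is clear, and the source overlap is a countable union of null sets, hence null.
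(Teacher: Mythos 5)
Your argument is correct and follows essentially the same route as the paper's proof: two alternating paths whose sources overlap on a set of positive measure must share their first edge by determinism of the graphing containing it, the overlap is transported through the realisers edge by edge, and the case where one path is a proper prefix of the other is ruled out because the common image would have to lie both inside and outside the cut $C$. One small correction to your closing remark: the stack maps of $\microcosm{n}_i$ (e.g.\ $\pushzero$, $\pop$) are \emph{not} measure-preserving --- they scale the measure of cylinders by a factor of $3$ --- but they are non-singular, and non-singularity (null sets have null images and preimages) is all your pull-back step actually needs; this is precisely the hypothesis the paper's proof invokes.
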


\begin{proof}
A \emph{deterministic graphing} $F$ satisfies that for every edges $e,f\in E^{F}$, $S^{F}_{e}\cap S^{F}_{f}$ is of null measure. Suppose that the graphing $F\plug G$ is not deterministic. Then there exists a Borel $B$ of non-zero measure and two edges $e,f\in E^{F\plug G}$ such that $B\subset S^{F\plug G}_{e}\cap S^{F\plug G}_{f}$. The edges $e,f$ correspond to paths $\pi_{e}$ and $\pi_{f}$ alternating between $F$ and $G$. It is clear that the first step of these paths belong to the same graphing, say $F$ without loss of generality, because the Borel set $B$ did not belong to the \emph{cut}. Thus $\pi_{e}$ and $\pi_{f}$ can be written $\pi_{e}=f_{0}\pi^{1}_{e}$ and $\pi_{f}=f_{0}\pi^{1}_{f}$. Thus the domains of the paths $\pi^{1}_{e}$ and $\pi_{f}^{1}$ coincide on the Borel set $\phi_{f_{0}}^{F}(B)$ which is of non-zero measure since all maps considered are non-singular. One can then continue the reasoning up to the end of one of the paths and show that they are equal up to this point. Now, if one of the paths ends before the other we have a contradiction because it would mean the the Borel set under consideration would be at the same time inside and outside the cut, which is not possible. So both paths have the same length and are therefore equal. Which shows that $F\plug G$ is deterministic since we have shown that if the domain of two paths alternating between $F$ and $G$ coincide on a non-zero measure Borel set, the two paths are equal (hence they correspond to the same edge in $F\plug G$).
\end{proof}

One can then check that the interpretations of proofs by graphings in earlier papers \cite{seiller-goig,seiller-goie,seiller-goif} are all deterministic. This gives us the following theorem as a corollary of the previous lemma.

\begin{theorem}[Deterministic model]
Let $\Omega$ be a monoid and $\microcosm{m}$ a microcosm. The set of $\Omega$-weighted \emph{deterministic} graphings in $\microcosm{m}$ yields a model, denoted by $\dmodel{\Omega}{m}$, of multiplicative-additive linear logic.
\end{theorem}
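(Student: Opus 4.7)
The plan is to derive the theorem as a corollary of Theorem~\ref{thm:mallmodels}, by exhibiting the class of deterministic graphings as a submodel of $\model{\Omega}{m}$. Concretely, I would argue that (i) the interpretation of every \MALL proof rule, when performed in $\model{\Omega}{m}$, already lands inside the subclass of deterministic graphings, and (ii) this subclass is stable under all the operations used in the model (execution, tensor, additive pairing, and negation/orthogonality). Together, (i) and (ii) imply that $\dmodel{\Omega}{m}$ inherits the full \MALL model structure from $\model{\Omega}{m}$.

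The first step is to check the base cases. The axiom rule is interpreted by a (half-)identity graphing $\identity[U]$ with a single edge of realiser $x\mapsto x$ and weight~$1$; by inspection of the definition of $\identity[U]$, at most one edge has a given point in its source, so the graphing is deterministic. Similarly, the graphings used to implement the structural rules (weakening, exchange, etc.) are built from a single permutation edge per component of the support and are deterministic.

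The second step is to check the inductive cases. For the multiplicative connectives, if $F$ and $G$ are deterministic graphings with disjoint supports, the disjoint union representing $F\otimes G$ is again deterministic since sources live in disjoint Borel subsets of $\measure{X}$. The linear implication $\cond{A\multimap B}$ is defined through orthogonality, so determinism is automatic on the realisers that populate the interpretation of a proof: the cut rule is interpreted by the execution $F\plug G$, which by the lemma just proved is deterministic as soon as $F$ and $G$ are. For the additive connectives, the interpretation uses formal sums/slices of graphings; determinism is a slicewise property (each slice has pairwise-almost-disjoint source sets) and is preserved by the additive pairing since the slices of the resulting project are just the slices of its components.

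The main subtlety — and the only point where one should be careful — is (ii) for the additive constructions: one must verify that the pairing of two sliced deterministic projects does not, at the level of a single slice, produce edges with overlapping sources. This is immediate from the fact that each slice is built by disjoint-support tensoring of the input slices, but it is the place where the proof would be fragile if the definition of \enquote{deterministic} were strengthened beyond null-measure overlap. Once this is dispensed with, the theorem follows because soundness of cut elimination in $\dmodel{\Omega}{m}$ is inherited verbatim from $\model{\Omega}{m}$, the interpretations agree on the nose, and the orthogonality-based definitions of conducts and behaviours restrict cleanly to the deterministic universe.
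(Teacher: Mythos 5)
Your proposal is correct and follows essentially the same route as the paper: the paper derives the theorem as a corollary of the closure-under-execution lemma together with the (asserted, not detailed) check that the interpretations of \MALL{} proofs from the earlier Interaction Graphs papers are all deterministic graphings, which is exactly your points (i) and (ii). Your fleshing-out of the base and inductive cases, including the additive/slicewise verification, is a more explicit version of what the paper leaves to the reader.
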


\begin{definition}
We define the complexity class $\preddet{m}$ as the set
\[ \{\mathcal{L}^{\testdet}(M)\mid M\text{ $\microcosm{m}$-machine in $\dmodel{\{1\}}{m}$} \}. \]
\end{definition}

\subsection{The Probabilistic Model}

One can also consider several other classes of graphings. We explain here the simplest non-classical model one could consider, namely that of \emph{sub-probabilistic graphings}. In order for this notion to be well-defined, one should suppose that the unit interval $[0,1]$ endowed with multiplication is a submonoid of $\Omega$.

\begin{definition}
A graphing $\graphing{G}=\{S^{G}_{e},\phi^{G}_{e},\omega^{G}_{e}~|~e\in E^{G}\}$ is \emph{sub-probabilistic} if all the edges have weight in $[0,1]$ and the following holds:
\[ \mu\left(\left\{x\in \measure{X}~|~ \sum_{e\in E^{G}, x\in S^{G}_{e}}\omega^{G}_{e}>1\right\}\right)=0 \]
\end{definition}
%

It turns out that this notion of graphing also behaves well under composition, i.e. there exists a \emph{sub-probabilistic} submodel of $\model{\Omega}{m}$, namely the model of \emph{sub-probabilistic graphings}.

\begin{theorem}
The set of sub-probabilistic graphings is closed under execution.
\end{theorem}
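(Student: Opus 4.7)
The plan is to interpret alternating paths in the execution $F\plug G$ as trajectories of a sub-stochastic chain and to bound the total mass that can exit the cut. Fix $x$ in the support $V \disjun W$ of $F\plug G$ and recall that edges of $F \plug G$ correspond to finite alternating paths $\pi = e_{1}\dots e_{n}$ whose intermediate points lie in the cut $C$ and whose source and target lie outside $C$, with weight $\omega_{\pi} = \prod_{i} \omega_{e_{i}}$. One must show that $\sum_{\pi : x \in S_{\pi}} \omega_{\pi} \leq 1$ for almost every such $x$, which is exactly the sub-probabilistic condition at $x$ for $F\plug G$.

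For each $n \geq 1$, I would split the alternating paths of length $n$ through $x$ according to whether the endpoint $\phi_{\pi}(x)$ lies outside $C$ (total weight $E_{n}(x)$, the length-$n$ contributions to the sum to be bounded) or inside $C$ (total weight $L_{n}(x)$, the live intermediate paths). The base case $E_{1}(x) + L_{1}(x) \leq 1$ is just the sub-probabilistic inequality applied to whichever of $F$ or $G$ contains $x$ in its support, summed over all first edges through $x$. For the inductive step, any alternating path of length $n+1$ decomposes uniquely as a live length-$n$ path $\pi$ followed by a single edge from the dual graphing whose source contains $y = \phi_{\pi}(x)$; summing over the extensions and invoking sub-probability at $y$ gives $E_{n+1}(x) + L_{n+1}(x) \leq L_{n}(x)$. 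A direct induction then yields $\sum_{i=1}^{n} E_{i}(x) + L_{n}(x) \leq 1$, and letting $n \to \infty$ while discarding the non-negative term $L_{n}(x)$ produces $\sum_{\pi} \omega_{\pi} = \sum_{i \geq 1} E_{i}(x) \leq 1$, as required.

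The main obstacle is that the sub-probabilistic property on $F$ and $G$ only holds almost everywhere, so the inductive computation must avoid the exceptional null sets simultaneously along every endpoint $\phi_{\pi}(x)$ it ever examines. I would dispatch this by fixing once and for all the null sets $N_{F}$ and $N_{G}$ on which sub-probability fails for $F$ and $G$, and defining a global exceptional set $N$ as the union of the pullbacks $\phi_{e_{k}}^{-1} \circ \cdots \circ \phi_{e_{1}}^{-1}(N_{F} \cup N_{G})$ along all finite prefixes $e_{1}\dots e_{k}$ of alternating paths. Since each graphing has a countable edge set and each realizer is non-singular (indeed measure-preserving in all microcosms under consideration), $N$ is a countable union of null sets, hence null; for $x \notin N$ the inductive computation goes through without exception. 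Finally, the disjointness of the sources attached to distinct paths guarantees that $\sum_{\pi} \omega_{\pi}$ is exactly the sum appearing in the sub-probabilistic condition for $F\plug G$, completing the argument.
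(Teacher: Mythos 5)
Your argument is correct, and it takes a genuinely more quantitative route than the paper's. The paper also splits the weight of paths through a point according to the first edge and invokes the one-step sub-probabilistic bound at successor points, but it concludes qualitatively: it argues that any point of $\outset{F\plug G}$ admits a finite alternating path leading to a point where a one-step bound already fails, and then covers $\outset{F\plug G}$ by countably many images of that null set under the (non-singular) path maps --- which forces it to worry about the recursion terminating (the remark that $x$ is not on a cycle) and to treat $\outset{F\cup G}$ as negligible, which is only accurate when read per-graphing, since a cut point may carry mass close to $1$ from both $F$ and $G$ simultaneously. Your telescoping inequality $E_{n+1}(x)+L_{n+1}(x)\leq L_{n}(x)$ buys exactly what that argument leaves implicit: mass that stays in the cut forever simply persists as $L_{n}(x)$ and is discarded in the limit, so no termination or cycle discussion is needed, and the measure-theoretic bookkeeping reduces to one fixed null set $N$ of pullbacks. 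Two small tightenings: besides the pullbacks of $N_{F}\cup N_{G}$ you should also put into $N$ the pullbacks of the null overlap $V\cap W$, since this is what guarantees, pointwise, that an extendable length-$n$ prefix has its endpoint in the cut $C$, so the extension step really invokes sub-probability of the dual graphing and the telescoping is exact (this also covers the base case, where you tacitly assume $x$ lies in only one of the two supports); and your last sentence should not appeal to disjointness of sources of distinct paths --- distinct alternating paths may well have overlapping sources --- but only to the fact that each edge of $F\plug G$ is indexed by a single alternating path restricted to the points exiting $C$, so that the sum in the sub-probabilistic condition for $F\plug G$ at $x$ is precisely $\sum_{i\geq 1}E_{i}(x)$.
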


\begin{proof}
If the weights of edges in $F$ and $G$ are elements of $[0,1]$, then it is clear that the weights of edges in $F\plug G$ are also elements of $[0,1]$. We therefore only need to check that the second condition is preserved.

Let us denote by $\outset{F\plug G}$ the set of $x\in X$ which are source of paths whose added weight is greater than $1$, and by $\outset{F\cup G}$ the set of $x$ which are source of edges (either in $F$ or $G$) whose added weight is greater than $1$. First, we notice that if $x\in\outset{F\plug G}$ then either $x\in\outset{F\cup G}$, or $x$ is mapped -- through at least one edge -- to an element $y$ which is itself in $\outset{F\cup G}$. To prove this statement, let us write $\outpaths{x}$ (resp. $\outedges{x}$) the set of paths in $F\plug G$ (resp. edges in $F$ or $G$) whose source contain $x$. We know the sum of all the weights of these paths is greater than $1$, i.e. $\sum_{\pi\in\outpaths{x}}\omega(\pi)>1$. But this sum can be rearranged by ordering paths depending on theirs initial edge, i.e. $\sum_{\pi\in\outpaths{x}}\omega(\pi)=\sum_{e\in\outedges{x}}\sum_{\pi=e\rho\in\outpaths{x}^{e}}\omega(\pi)$, where $\outpaths{x}^{e}$ denotes the paths whose first edge is $e$. Now, since the weight of $e$ appears in all $\omega(e\rho)=\omega(e)\omega(\rho)$, we can factorize and obtain the following inequality.
\[ \sum_{e\in\outedges{x}}\omega(e)\left(\sum_{\pi=e\rho\in\outpaths{x}^{e}}\omega(\rho)\right)>1 \] 
Since the sum $\sum_{e\in\outedges{x}}\omega(e)$ is not greater than $1$, we deduce that there exists at least one $e\in\outedges{x}$ such that $\sum_{\pi=e\rho\in\outpaths{x}^{e}}\omega(\rho)>1$. However, this means that $\phi_{e}(x)$ is an element of $\outset{F\plug G}$.

Now, we must note that $x$ is not element of a cycle. This is clear from the fact that $x$ lies in the carrier of $F\plug G$.

Then, an induction shows that $x$ is an element of $\outset{F\plug G}$ if and only if there is a (finite, possibly empty) path from $x$ to an element of $\outset{F\cup G}$, i.e. $\outset{F\plug G}$ is at most a countable union of images of the set $\outset{F\cup G}$. But since all maps considered are non-singular, these images of $\outset{F\cup G}$ are negligible subsets since $\outset{F\cup G}$ is itself negligible. This ends the proof as a countable union of copies of negligible sets are negligible (by countable additivity), hence $\outset{F\plug G}$ is negligible.
\end{proof}

\begin{theorem}[Probabilistic model]
Let $\Omega$ be a monoid and $\microcosm{m}$ a microcosm. The set of $\Omega$-weighted \emph{sub-probabilistic} graphings in $\microcosm{m}$ yields a model, which we will denote $\pmodel{\Omega}{m}$, of multiplicative-additive linear logic.
\end{theorem}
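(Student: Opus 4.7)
The plan is to proceed in direct analogy with the proof sketch for the deterministic model given just above. The main technical input, namely the stability of sub-probabilistic graphings under the execution operator $\plug$, is exactly the content of the preceding theorem, so cut-composition stays inside the sub-probabilistic fragment. It then suffices to combine this closure with two further observations to conclude by restriction of Theorem~\ref{thm:mallmodels}.

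First I would check that the canonical graphings used in the earlier papers \cite{seiller-goig,seiller-goie,seiller-goif} to interpret the MALL proof-rules -- identities, axioms, structural delocations, the multiplicative wirings for $\otimes$ and $\multimap$, and the slicing patterns for $\with$ and $\oplus$ -- are sub-probabilistic. In fact, these are all \emph{deterministic} in the sense defined in this section: each edge carries weight $1$ and the sources are essentially pairwise disjoint. A deterministic graphing is trivially sub-probabilistic because at almost every $x$ at most one edge $e$ satisfies $x \in S^{G}_{e}$, so the sum $\sum_{e,\,x\in S^{G}_{e}} \omega^{G}_{e}$ is either $0$ or $1$. This is the probabilistic analogue of the remark already made for the deterministic model just above.

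Second, I would verify that the connective-forming constructions preserve the sub-probabilistic condition. For the multiplicatives, the tensor and the linear implication combine two projects on disjoint supports $V^{A}$ and $V^{B}$; at any point $x$ in the combined support, the sum $\sum_{e,\,x\in S_{e}}\omega_{e}$ involves only edges from one of the two factors, so the inequality $\le 1$ is inherited. For the additives, slicing only takes formal linear combinations of graphings and does not affect the pointwise sub-probability condition of each individual slice. Together with closure under $\plug$, this shows that all operations needed to re-enact the construction of conducts and behaviours restrict to the sub-probabilistic fragment.

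Finally, I would define the conducts and behaviours of $\pmodel{\Omega}{m}$ by applying the same definitions as in $\model{\Omega}{m}$ but relative to sub-probabilistic projects; the measurement of \autoref{def:measurement} and the orthogonality relation do not depend on the ambient model, so double-orthogonal closure inside the sub-probabilistic fragment is well defined. The expected main obstacle, had there been one, would have been closure of $\plug$, but that is precisely what the preceding theorem gives us; the remaining verifications are essentially bookkeeping entirely parallel to the deterministic case.
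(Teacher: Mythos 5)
Your proposal is correct and follows essentially the same route as the paper, which derives the probabilistic model theorem exactly as in the deterministic case: closure of sub-probabilistic graphings under execution (the preceding theorem) combined with the observation that the interpretations of the MALL rules from the earlier papers are deterministic, hence sub-probabilistic, so the construction of Theorem~\ref{thm:mallmodels} restricts to the sub-probabilistic fragment. Your additional bookkeeping about the connectives and the orthogonality being independent of the ambient model is exactly the implicit content of the paper's argument.
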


We will now show how deterministic and probabilistic complexity classes can be characterised by means of the type of predicates $\pred{m}$ in the deterministic and probabilistic models respectively. We will start by establishing soundness by showing how the computation by automata can be represented by the execution between graphings and word representations.

\section{Soundness}

%
%

The proof of the characterisation theorem \cite{seiller-goinda} relies on a representation of multihead automata as graphings. We here generalise the result to probabilistic automata with a pushdown stack. For practical purposes, we consider a variant of the classical notion of probabilistic two-way multihead finite automata with a pushdown stack obtained by:
\begin{itemize}[nolistsep,noitemsep]
\item fixing the right and left end-markers as both being equal to the fixed symbol $\star$;
\item fixing once and for all unique initial, accept and reject states;
\item choosing that each transition step moves exactly one of the multiple heads of the automaton;
\item imposing that all heads are repositioned on the left end-marker and the stack is emptied before accepting/rejecting.
\item symbols from the stack are read by performing a $\pop$ instruction; if the end-of-stack symbol $\star$ is popped, it is pushed on the stack in the next transition.
\end{itemize}
It should be clear that these choices in design have no effect on the sets of languages recognised.

\begin{definition}\label{def:cc}
A \emph{$k$-heads probabilistic two-way multihead finite automata with a pushdown stack} (\cctwpfa{k}) $\automaton{M}$  is defined as a tuple $(\Sigma,Q,\rightarrow)$, where the transition function $\rightarrow$ is a map that associates to each element of $\starred{\Sigma}^{k}\times Q$ a sub-probability distribution over the set $\left(\textrm{Inst}\times Q\right)$ where $\textrm{Inst}$ is the set of instructions: $(\{1,\dots,k\}\times\{\In,\Out\})\times\{\identity,\pop,\pushone,\pushzero,\pushstar\}$.

The set of deterministic (resp. probabilistic) two-way multihead automata with $k$ heads is written $\twdfa{k}$ (resp. $\twpdfa{k}$) and the corresponding complexity class is noted \cctwdfa{k} (resp. \cctwpfa{k}). The set of all deterministic (resp. probabilistic) two-way multihead automata $\cup_{k\geqslant 1}\twdfa{k}$ is denoted by $\twdfas$ (resp $\twpfas$): the corresponding complexity classes \cctwdfa{$\infty$} and \cctwpfa{$\infty$} are known to be equal to $\Logspace$ and $\PLogspace$ \cite{Holzer}.

The set of $k$ heads deterministic (resp. probabilistic) two-way multihead automata with a pushdown stack is written $\twdfastack{k}$ (resp. $\twpdfastack{k}$) and the corresponding complexity class is noted \cctwdfastack{k} (resp. \cctwpfastack{k}). The set of all deterministic (resp. probabilistic) two-way multihead automata with a pushdown stack $\cup_{k\geqslant 1}\twdfastack{k}$ is denoted by $\twdfastacks$ (resp $\twpfastacks$): the corresponding complexity classes \cctwdfastack{$\infty$} and \cctwpfastack{$\infty$} are known to be equal to $\Ptime$ \cite{Macarie} and $\PPtime$.
\end{definition}

We now describe how to extend the author's translation of multihead automata as graphings to the set of all \cctwpfa{k}.
To simplify the definition, we define for all for $\automaton{t}=((\arrowedvec{s},q),(i,d',q'))$ the notation $\automaton{t}\in\rightarrow$ to denote that $\rightarrow(\arrowedvec{s},q)(i,d',q')>0$, i.e. the probability that the automaton will perform the transition $\automaton{t}$ is non-zero.

The encoding is heavy but the principle is easy to grasp. We use the stateset to keep track of the last values read by the heads, as well as the last popped symbol from the stack. The subtlety is that we also keep track of the permutation of the heads of the machine. Indeed, the graphing representation has the peculiarity that moving one head requires to use a permutation. As a consequence, to keep track of where the heads are at a given point, we store and update a permutation. Lastly, the stack is initiated with the symbol $\star$; this is done by simply restricting the source of the edges from the initial state to the subspace $V(\star)$ of sequences starting with the symbol $\star$.

\begin{definition}
Let $\automaton{M}=(\Sigma,Q,\rightarrow)$ be a \cctwpfa{k}. We define $\autograph{M}$ a graphing in $\microcosm{n}$ with dialect -- set of states -- $Q\times\mathfrak{G}_{k}\times\{\star,0,1\}^{k}\times\{\star,0,1\}$ as follows. 
\begin{itemize}
\item each transition of the form $\automaton{t}=((\arrowedvec{s},q),(\nu,q'))$ with $q\neq\init$ and $\nu=(i,d')\times\iota$ with $\iota\neq\pop$ gives rise to a family of edges indexed by a permutation $\sigma$ and an element $u$ of $\{\star,0,1\}$: 
\begin{eqnarray*}
\lefteqn{\support{(a,d)}\times\{(q,\sigma,\arrowedvec{s},u)\}}\\
&\longrightarrow& \support{(s_{i},d')}\times\{(q',\tau_{1,\sigma(i)}\circ\sigma,\arrowedvec{s}[s_{\sigma^{-1}(1)}:=s],u)\},
\end{eqnarray*} 
realised by the map $\ttrm{p}_{(1,\sigma(i))}$ together with the adequate map on the stack subspace and the adequate translation on $\integerN$, and of weight $\rightarrow(\arrowedvec{s},q)(\nu,q')$;
\item each transition of the form $\automaton{t}=((\arrowedvec{s},q),(\nu,q'))$ with $q\neq\init$ and $\nu=(i,d')\times\pop$  gives rise to a family of edges indexed by a permutation $\sigma$ and an element $u$ of $\{\star,0,1\}$: 
\begin{eqnarray*}
\lefteqn{\support{(a,d)}^{V(u)}\times\{(q,\sigma,\arrowedvec{s})\}}\\
&\longrightarrow& \support{(s_{i},d')}\times\{(q',\tau_{1,\sigma(i)}\circ\sigma,\arrowedvec{s}[s_{\sigma^{-1}(1)}:=s])\},u
\end{eqnarray*} 
realised by the map $\ttrm{p}_{(1,\sigma(i))}$ composed with the $\pop$ map and the adequate translation on $\integerN$, and of weight $\rightarrow(\arrowedvec{s},q)(\nu,q')$;
\item each transition of the form $\automaton{t}=((\arrowedvec{s},q),(\nu,q'))$ with $q=\init$ and $\nu=(i,d')\times\iota$ with $\iota\neq\pop$ gives rise to a family of edges indexed by an element $v\in\{\ttrm{a},\ttrm{r}\}$ and an element $u$ of $\{\star,0,1\}$: 
\begin{eqnarray*}
\lefteqn{\support{\ttrm{v}}^{V(\star)}\times\{(\textrm{init},\textrm{Id},\arrowedvecstar,u)\}}\\
&\longrightarrow& \support{(s_{i},d')}\times\{(q',\tau_{1,\sigma(i)},\arrowedvec{s}[s_{\sigma^{-1}(1)}:=s],u)\},
\end{eqnarray*}  
realised by the map $\ttrm{p}_{(1,\sigma(i))}$ together with the adequate map on the stack subspace and the adequate translation on $\integerN$, and of weight $\rightarrow(\arrowedvec{s},q)(\nu,q')$;
\item each transition of the form $\automaton{t}=((\arrowedvec{s},q),(\nu,q'))$ with $q=\init$ and $\nu=(i,d')\times\pop$ gives rise to a family of edges indexed by an element $v\in\{\ttrm{a},\ttrm{r}\}$ and an element $u$ of $\{\star,0,1\}$: 
\begin{eqnarray*}
\lefteqn{\support{\ttrm{v}}^{V(\star)}\times\{(\textrm{init},\textrm{Id},\arrowedvecstar,u)\}}\\
&\longrightarrow& \support{(s_{i},d')}\times\{(q',\tau_{1,\sigma(i)},\arrowedvec{s}[s_{\sigma^{-1}(1)}:=s],u)\},
\end{eqnarray*} 
realised by the map $\ttrm{p}_{(1,\sigma(i))}$ together with the $\pop$ map on the stack subspace and the adequate translation on $\integerN$, and of weight $\rightarrow(\arrowedvec{s},q)(\nu,q')$.
\end{itemize}
\end{definition}

We now generalise the key lemma from the previious paper \cite{seiller-goinda}. This result will be essential for all later results stated in this paper. The proof is a simple but lengthy induction.

\begin{proposition}\label{tracespaths}\label{prop:generalised46}
Let $\automaton{M}$ be a $\twpfa{k}$. Alternating paths of odd length between $\autograph{M}$ and $\oc \graphingtw{w}$ of source $\support{\ttrm{a}}_{Y}$ with\footnote{To understand where the subset $Y$ comes from, we refer the reader to the proof of Lemma \ref{technicallemma}.} $Y=[0,\frac{1}{\lg(\word{w})}]^{k}\times[0,1]^{\naturalN}$ are in a weight-preserving bijective correspondence with the non-empty computation traces of $\automaton{M}$ given $\word{w}$ as input.
\end{proposition}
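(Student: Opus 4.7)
The plan is to establish the bijection by an explicit induction on the length of the alternating path, matching each edge in $\autograph{M} \plug \oc\graphingtw{w}$ to a single transition of $\automaton{M}$. First I would note that odd length forces the path $\pi = e_1 e_2 \dots e_{2n+1}$ to begin and end in the same graphing; since the source lies in $\support{\ttrm{a}}_Y$ and the only edges emanating from $\support{\ttrm{a}}$ live in $\autograph{M}$ (arising from transitions with $q = \init$), both $e_1$ and $e_{2n+1}$ must be in $\autograph{M}$. The alternation therefore splits $\pi$ into a succession of pairs $(e_{2i+1}, e_{2i+2})$ each consisting of an automaton edge followed by a word-graphing edge, closed off by a final automaton edge. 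Each automaton edge faithfully encodes a transition of $\automaton{M}$, while each word-graphing edge encodes the update of the symbol visible to the moved head; the pairing thus recovers a candidate computation trace of $\automaton{M}$ on input $\word{w}$.

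The inductive step makes this precise. Assume a bijective correspondence up to length $2n+1$. Given a path of length $2n+3$, I would peel off the last two edges $e_{2n+2}, e_{2n+3}$: $e_{2n+2} \in \oc\graphingtw{w}$ transports the target of $e_{2n+1}$ within $\wordsupport$, synchronising the new cursor position with the symbol $s_i$ just written into the $i$-th coordinate of $\arrowedvec{s}$ by the automaton edge $e_{2n+1}$; then $e_{2n+3} \in \autograph{M}$ selects the next transition. The dialect components $(q, \sigma, \arrowedvec{s}, u)$ are precisely what the encoding needs to keep consistent from one edge to the next: the stored state $q$ must match the target state of the previous transition, the permutation $\sigma$ records which graphing head currently plays the role of head $1$ (the one actually moved), $\arrowedvec{s}$ stores the last-read symbols under each head, and $u$ is the top-of-stack symbol used to gate $\pop$-instructions via the restriction to the subspace $V(u)$. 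Composability of the path is therefore equivalent to compatibility of the transitions, which is exactly the definition of a computation trace.

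For the base case, a length-$1$ path consists of a single edge from $\support{\ttrm{a}}_Y$ to itself; by inspection of the cases with $q = \init$, such an edge exists only when $\automaton{M}$ has a transition from $\init$ to a state leading immediately to $\ttrm{a}$ with all heads back on $\star$ and empty stack, matching the trivial accepting traces. The restriction to $Y = [0, 1/\lg(\word{w})]^k \times [0,1]^{\mathbb{N}}$ accounts for the initial positioning of all $k$ heads on the left end-marker, as is made clear in the proof of Lemma \ref{technicallemma}. Finally, the weight of an alternating path is the product of the weights of its edges; edges in $\oc\graphingtw{w}$ carry weight $1$, while each edge in $\autograph{M}$ carries the weight $\rightarrow(\arrowedvec{s},q)(\nu,q')$ of the corresponding transition, so the product over $\pi$ coincides with the product of transition probabilities along the matched trace, establishing weight preservation.

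The principal difficulty is not the induction itself but the bookkeeping needed to verify that the dialect components and the stack subspace restrictions are updated coherently across all four families of edges in the encoding (four cases depending on whether $q = \init$ and whether $\iota = \pop$). In particular, the $\pop$ case requires checking that the restriction of the source to $V(u)$ on one edge lines up with the $u$-component of the dialect produced by the preceding edge, and the interplay between the permutation $\tau_{1, \sigma(i)} \circ \sigma$ and the head-swap map $\ttrm{p}_{(1, \sigma(i))}$ must be tracked carefully to ensure that the symbol $\arrowedvec{s}[s_{\sigma^{-1}(1)} := s]$ written into the dialect is indeed the symbol the automaton would read at the corresponding step. Once this bookkeeping is discharged, both directions of the bijection fall out simultaneously from the inductive construction.
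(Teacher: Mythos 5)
Your overall strategy is exactly the one the paper has in mind: the paper gives no detailed argument for this proposition (it is dismissed as ``a simple but lengthy induction'' generalising Proposition 46 of the earlier work), and your edge-by-edge induction --- pairing each $\autograph{M}$-edge with a transition, using the dialect components $(q,\sigma,\arrowedvec{s},u)$ to enforce compatibility of consecutive transitions, and reading weight preservation off the fact that $\oc\graphingtw{w}$-edges carry weight $1$ --- is precisely that induction, with the bookkeeping issues (the four transition cases, the $V(u)$ restriction for $\pop$, the interplay of $\tau_{1,\sigma(i)}\circ\sigma$ with $\ttrm{p}_{(1,\sigma(i))}$) correctly identified as the real work.

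One correction, though: your base case conflates the proposition with its corollary. The statement only constrains the \emph{source} of the path to lie in $\support{\ttrm{a}}_{Y}$ and asks for a bijection with \emph{all} non-empty computation traces, not with accepting ones; the requirement that the path return to $\support{\ttrm{a}}$ (heads repositioned, stack emptied) only enters in \autoref{pathaccept}. So a length-$1$ path is simply a single $\autograph{M}$-edge with source in $\support{\ttrm{a}}_{Y}$, i.e.\ an edge arising from a transition out of $\init$ (whose source restriction to $V(\star)$ encodes the initially empty stack, and the restriction to $Y$ the initial head positions), and it corresponds to the one-transition traces of $\automaton{M}$ --- not to ``trivial accepting traces'' ending at $\ttrm{a}$. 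As you state it, the induction would establish a bijection with returning (accepting) paths only, which is the wrong statement to induct on; with the base case repaired as above, the rest of your inductive step goes through unchanged.
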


\begin{corollary}\label{pathaccept}
The automaton $\automaton{M}$ accepts $\word{w}$ with probability $p$ if and only if $p$ is equal to the sum of the weights of alternating paths between $\autograph{M}$ and $\oc \graphingtw{w}$ of source and target $\support{\ttrm{a}}$.
\end{corollary}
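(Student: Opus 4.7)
The plan is to derive the corollary directly from Proposition \ref{prop:generalised46}. The probability that $\automaton{M}$ accepts $\word{w}$ is by definition $p=\sum_{\tau} \mathrm{Pr}(\tau)$, where $\tau$ ranges over the accepting computation traces on input $\word{w}$, i.e.\ traces starting in the initial state (with heads on the left end-marker and empty stack) and ending in the accepting configuration (all heads back on the left end-marker and stack emptied, by our convention from Definition \ref{def:cc}). The probability $\mathrm{Pr}(\tau)$ is the product of the probabilities of the individual transitions composing $\tau$. By construction of $\autograph{M}$, each transition of $\automaton{M}$ with probability $q$ gives rise to edges of weight $q$, so the weight of an alternating path is exactly the probability of the corresponding trace.

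First I would match the endpoints of paths with accepting configurations. By the definition of $\autograph{M}$, the edges emanating from the initial state have source included in $\support{\ttrm{a}}^{V(\star)}\cup\support{\ttrm{r}}^{V(\star)}$; those reaching the accept configuration (initial permutation, heads on the left end-marker, stack reduced to $\star$) have target included in $\support{\ttrm{a}}$. Hence accepting traces correspond, via Proposition \ref{prop:generalised46}, exactly to alternating paths of odd length between $\autograph{M}$ and $\oc\graphingtw{w}$ whose source and target both lie in $\support{\ttrm{a}}_{Y}$ with $Y=[0,\tfrac{1}{\lg(\word{w})}]^{k}\times[0,1]^{\naturalN}$. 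The correspondence is weight-preserving, so summing over all accepting traces yields the sum of weights of these paths.

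Next I would argue that the sum of weights of paths of source and target $\support{\ttrm{a}}$ (without the $Y$ restriction) equals the sum over source and target $\support{\ttrm{a}}_{Y}$. Paths whose source lies outside $\support{\ttrm{a}}_{Y}$ cannot correspond to genuine traces under Proposition \ref{prop:generalised46}: the factor $Y$ encodes the head positions using the fact that after $\lg(\word{w})$ applications of the permutation-based translation each head has returned to its start, so only points in $Y$ produce finite-length accepting computations. Points outside $Y$ either generate no alternating path of the required shape or generate paths that never return to $\support{\ttrm{a}}$. Thus the two sets of paths have the same total weight, and the identification is complete.

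The only delicate point — which I would treat with a single paragraph of justification — is that when passing from paths with source/target $\support{\ttrm{a}}_{Y}$ to paths with source/target $\support{\ttrm{a}}$, no spurious contribution is picked up: this follows because Proposition \ref{prop:generalised46} is a bijection and the weight of an alternating path is constant on the fibres of the source projection, so weights can be integrated and the normalisation by $\lg(\word{w})^{k}$ absorbed in the measure of $Y$. Once this is established, the equality $p=\sum_{\pi}\omega(\pi)$ is immediate, giving both directions of the stated equivalence.
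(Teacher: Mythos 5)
Your first paragraph is essentially the paper's own (implicit) proof: the corollary is presented as an immediate consequence of Proposition~\ref{prop:generalised46}, since the acceptance probability is by definition the sum, over accepting computation traces, of the products of the transition probabilities, and the correspondence of that proposition is weight-preserving and matches accepting traces with alternating paths starting and ending on the accept region (with the restriction to $Y=[0,\frac{1}{\lg(\word{w})}]^{k}\times[0,1]^{\naturalN}$). Up to that point your proposal follows the same route as the paper and is fine.

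The weak part is your attempt to pass from paths with source and target in $\support{\ttrm{a}}_{Y}$ to paths with source and target in $\support{\ttrm{a}}$. First, the assertion that points of $\support{\ttrm{a}}$ outside $\support{\ttrm{a}}_{Y}$ \enquote{either generate no alternating path of the required shape or generate paths that never return to $\support{\ttrm{a}}$} is exactly the non-trivial point, and you give no argument for it: such a point encodes head positions other than the left end-marker, the initial edges of $\autograph{M}$ are defined on all of $\support{\ttrm{a}}^{V(\star)}$, and Proposition~\ref{prop:generalised46} says nothing about these paths, so nothing you have stated prevents such a path (a simulation started from a shifted head configuration) from ending on $\support{\ttrm{a}}$ and contributing spurious weight. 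Second, the closing paragraph is not meaningful in this setting: the weight of an alternating path is the product in $\Omega$ of the weights of its edges, so there is no \enquote{integration over the fibres of the source projection} and no normalisation by $\lg(\word{w})^{k}$ or by the measure of $Y$ entering the quantity the corollary speaks of; integrals only appear later, in the measurement and orthogonality computations. The reading consistent with the paper is that the corollary silently carries the restriction of Proposition~\ref{prop:generalised46}, i.e.\ the paths meant are those with source and target $\support{\ttrm{a}}_{Y}$, and this restricted form is the one actually used afterwards: the tests $\testprob[\epsilon]$ are supported on $\support{\ttrm{a}}_{[0,\frac{1}{n}]^{n}\times[0,1]^{\naturalN}}$ and the completeness argument explicitly relies on the shrinking supports to force the relevant cycles through the first $N$-cube. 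So you should either state and use the corollary with the $Y$-restriction, or supply a genuine argument (absent from your proposal) that unrestricted paths from $\support{\ttrm{a}}$ to itself contribute nothing extra.
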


We now define the probabilistic tests. These will be used to characterise probabilistic classes thanks to the lemma that follows and which relates the probability that a computation accepts with the orthogonality.

\begin{definition}
For $\eta>0$, we define the test $\testprob[\epsilon]$ as the set
\[ (\log(1-\frac{1}{2}.u),\halfidentity\support{\ttrm{a}}^{V(\star^n)}_{[0,\frac{1}{n}]^{n}\times[0,1]^{\naturalN}}])\mid u\in [0,\epsilon], n\in\naturalN \}. \]
\end{definition}

\begin{lemma}
The sum of the weights of alternating paths between $\autograph{M}$ and $\oc \graphingtw{w}$ of source and target $\support{\ttrm{a}}$ is greater than $\epsilon$ if and only if $\autograph{M}\plug\oc \graphingtw{w}\poll \testprob[\epsilon]$.
\end{lemma}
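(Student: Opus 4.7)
The plan is to compute the measurement $\meas[]{F, G_n}$, where $F = \autograph{M}\plug\oc\graphingtw{w}$ and $G_n = \halfidentity[\support{\ttrm{a}}^{V(\star^n)}_{[0,\frac{1}{n}]^{n}\times[0,1]^{\naturalN}}]$, and to show it equals $-\log(1 - p/2)$, where $p$ is the acceptance probability given by \autoref{pathaccept}. First I would identify the alternating cycles between $F$ and $G_n$. Since $G_n$ consists of a single edge of weight $\tfrac{1}{2}\cdot\mathbf{1}$ realised by the identity on $U_n := \support{\ttrm{a}}^{V(\star^n)}_{[0,\frac{1}{n}]^{n}\times[0,1]^{\naturalN}}$, every alternating cycle is, up to rotation, a concatenation $e_1\cdot\mathrm{id}\cdots e_k\cdot\mathrm{id}$ of edges $e_j\in F$ with source and target meeting $U_n$. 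By \autoref{prop:generalised46}, each such $e_j$ corresponds bijectively to a non-empty accepting trace of $\automaton{M}$ on $\word{w}$; moreover, the convention that every accepting trace returns all heads to the left end-marker and empties the pushdown stack ensures that the realiser $\phi_{e_j}$ restricted to $U_n$ acts as the identity (the head permutation resolves to the identity on the first $n$ coordinates, the stack coordinate returns to its starting sequence, and the $\realN$-translation loops back within $\support{\ttrm{a}}$). Consequently, the cycle realiser $\phi_\pi$ is the identity on $\supp{\pi}$ and $\rho_{\phi_\pi}(x)=1$ almost everywhere.

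The second step is to sum these contributions. The factor $[0,\frac{1}{n}]^{n}$ in $U_n$ is tailored to cancel the $n^{-n}$ measure normalisation introduced by the $\oc$-copy structure of $\oc\graphingtw{w}$, and similarly $V(\star^n)$ absorbs the $3^{-n}$ contribution from the pushdown coordinate, so that each primitive $2$-cycle built from an $F$-edge of weight $\alpha_e\cdot\mathbf{1}$ contributes $\alpha_e/2$ independently of $n$. A primitive alternating cycle of length $2k$ corresponds to a cyclically-ordered $k$-tuple of accepting traces, and the $1/\rho$ factor in the measurement formula records precisely the $1/k$ over-count from rotations. Summing, and using \autoref{pathaccept} to identify $\sum_e \alpha_e$ with $p$, one obtains
\[
\meas[]{F, G_n} \;=\; \sum_{k\geqslant 1} \frac{1}{k}\Bigl(\tfrac{p}{2}\Bigr)^{k} \;=\; -\log\!\Bigl(1-\tfrac{p}{2}\Bigr).
\]

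Finally, identifying $F$ with the project $(0,F)$, orthogonality with a test element $\de{t}_{u,n} = (\log(1-\tfrac{u}{2}), G_n) \in \testprob[\epsilon]$ amounts to the scalar sum $\log(1-\tfrac{u}{2}) - \log(1-\tfrac{p}{2}) = \log\bigl((1-u/2)/(1-p/2)\bigr)$ being nonzero and finite — the latter holds automatically for any $\epsilon<2$ — that is, to $u\neq p$. Requiring this for every $u\in[0,\epsilon]$ and every $n$ is, since $p\geqslant 0$, equivalent to $p\notin[0,\epsilon]$, i.e.\ to $p>\epsilon$. The main obstacle will be the measure-theoretic bookkeeping: verifying that the factors $[0,\frac{1}{n}]^{n}$ and $V(\star^n)$ in $U_n$ absorb the $\oc$- and stack-normalisations exactly so that the measurement is genuinely $n$-independent and equal to $-\log(1-p/2)$, and that the $1/\rho$ factor accounts correctly for cyclic rotations of concatenated accepting traces, turning the combinatorics of iterated cycles into a clean logarithmic series rather than a rational expression.
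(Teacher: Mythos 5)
Your overall route is the same as the paper's: evaluate the measurement of $\autograph{M}\plug\oc \graphingtw{w}$ against each test element as $\log(1-\frac{u}{2})-\log(1-\frac{p}{2})$, identify $p$ with the acceptance probability via \autoref{pathaccept}, and conclude by monotonicity of the logarithm that orthogonality to all elements with $u\in[0,\epsilon]$ (and all $n$) is equivalent to $p>\epsilon$; your last step matches the paper's. The gap is in how you obtain the value $-\log(1-\frac{p}{2})$. You first observe, correctly, that the realisers of the relevant alternating cycles act as the identity on the test support, so $\rho_{\phi_{\pi}}(x)=1$ almost everywhere; but you then claim that ``the $1/\rho$ factor records the $1/k$ over-count from rotations'' of a length-$2k$ cycle. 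These two claims are incompatible: $\rho_{\phi_{\pi}}$ is the orbit length of the realiser at a point, not the combinatorial length of the circuit, and with identity realisers it contributes nothing. With the parameter map $m(x,y)=xy$ and $\rho\equiv 1$, a direct enumeration of circuits does not by itself produce the series $\sum_{k\geqslant 1}\frac{1}{k}(\frac{p}{2})^{k}$; the missing $1/k$ (equivalently, the reduction of arbitrarily long alternating cycles to a single fused $2$-cycle) is precisely what the paper imports from earlier work: Proposition 16 of \cite{seiller-goim} states that for this quantitative measurement one may fuse edges with the same source and target by summing their weights without changing the measurement, after which the computation reduces to one pair of identity edges of weights $\sum_{i}p_{i}$ and $\frac{1}{2}\cdot\mathbf{1}$ and gives $-\log(1-\frac{1}{2}\sum_{i}p_{i})$ directly. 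Without invoking that result (or proving the determinant-style identity behind it), your central computation is unsupported.

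Two secondary points. The edges of $\autograph{M}\plug\oc\graphingtw{w}$ carry weights $\alpha_{e}$ without the marker $\mathbf{1}$; the marker enters only through the test edge, so writing ``an $F$-edge of weight $\alpha_{e}\cdot\mathbf{1}$'' misstates where the nonzero value of $m$ comes from. Also, your claim that the factors $[0,\frac{1}{n}]^{n}$ and $V(\star^{n})$ exactly cancel the $n^{-n}$ and $3^{-n}$ normalisations, making the measurement $n$-independent, is asserted rather than proved; the paper's own proof silently treats this normalisation as $1$, so your attention to it is reasonable, but as it stands it is a postulate, not an argument.
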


\begin{proof}
Let us write the weights of alternating paths between $\autograph{M}$ and $\oc \graphingtw{w}$ of source and target $\support{\ttrm{a}}$ as $p_0,p_1,\dots, p_k$. We use here a result from the first work on Interaction Graphs \cite{seiller-goim} showing that in the probabilistic case the measurement of two graphs $\meas{G,H}$ is equal to the measurement of the graphs $\meas{\hat{G},\hat{H}}$ where $\hat{.}$ fusion the edges with same source and target into a single edge by summing the weights \cite[Proposition 16]{seiller-goim}. Therefore, $\meas[]{\autograph{M}\plug\oc \graphingtw{w}, \testprob[\eta]}$ is equal to $\eta-\log(1-m(\frac{1}{2}\cdot\mathbf{1}.(\sum p_i)))=\eta-\log(1-\frac{1}{2}(\sum p_i))$. Now, $\sum p_i>\epsilon$ if and only if $1-\frac{1}{2} (\sum p_i)<1-\frac{1}{2}\epsilon$, if and only if $-\log(1-\frac{1}{2} (\sum p_i))>-\log(1-\frac{1}{2}\epsilon)$. I.e. $\sum p_i>\epsilon$ if and only if $\log(1-\frac{1}{2}\epsilon)-\log(1-\frac{1}{2} (\sum p_i))>0$. This gives the result, i.e. $\sum p_i>\epsilon$ if and only if $\autograph{M}\plug\oc \graphingtw{w}\poll (\log(1-\frac{1}{2}.u),\halfidentity\support{\ttrm{a}}_{[0,\frac{1}{n}]^{n}\times[0,1]^{\naturalN}}])$ for all $u\in[0,\epsilon]$.
\end{proof}

\begin{definition}
We define the complexity class $\predprob{m}$ as the set
\[ \{\mathcal{L}^{\testprob[\frac{1}{2}]}(M)\mid M\text{ $\microcosm{m}$-machine in $\pmodel{[0,1]}{m}$} \}. \]
\end{definition}

Using the preceding lemma and the definitions of the complexity classes, we obtain the following theorem.

\begin{theorem}
For all $i\in\naturalN^{\ast}\cup\{\infty\}$, 
\[\text{\cctwdfa{i}}\subseteq\preddet{m_{i}}\hspace{2em} \text{\cctwdfastack{i}}\subseteq\preddet{n_{i}}\]
\[\text{\cctwpfa{i}}\subseteq\predprob{m_{i}}\hspace{2em} \text{\cctwpfastack{i}}\subseteq\predprob{n_{i}}\]
\end{theorem}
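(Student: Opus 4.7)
The plan is to take any automaton $\automaton{M}$ in one of the four classes of Definition \ref{def:cc}, translate it with the already-defined map $\automaton{M}\mapsto\autograph{M}$, and then verify three things: (a) $\autograph{M}$ is a finite graphing in the intended microcosm ($\microcosm{m}_k$ without a stack, $\microcosm{n}_k$ with one); (b) $\autograph{M}$ lies in the appropriate submodel, namely $\dmodel{\{1\}}{m_k}$ or $\dmodel{\{1\}}{n_k}$ in the deterministic case and $\pmodel{[0,1]}{m_k}$ or $\pmodel{[0,1]}{n_k}$ in the probabilistic one; (c) the language accepted by $\automaton{M}$ is contained in $\mathcal{L}^{\testdet}(\autograph{M})$, respectively $\mathcal{L}^{\testprob[1/2]}(\autograph{M})$.

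For (a), I would inspect the four edge families defining $\autograph{M}$: each realiser is a composition of an integer translation on the $\realN$ factor, a transposition $\ttrm{p}_{(1,\sigma(i))}$ moving only indices $\leq k$ in $\seq{s}$, and, in the pushdown case, one of the maps $\pop$, $\pushstar$, $\pushzero$, $\pushone$; these belong respectively to $\microcosm{m}_k$ or $\microcosm{n}_k$. Finiteness follows immediately since $\rightarrow$ is nonzero only on a finite subset of $\starred{\Sigma}^k\times Q$, and each tuple generates a family of edges indexed by the finite sets $\mathfrak{G}_k$, $\{\star,0,1\}^k$, $\{\star,0,1\}$ and $\{\ttrm{a},\ttrm{r}\}$. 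For (b), in the deterministic case every transition has weight $1$; edges arising from different source state tuples $(q,\sigma,\vec{s},u)$ have disjoint dialect components, while within a fixed source state the requirement that distinct transitions read distinct tuples $\vec{s}$ pins down the letter at the active head position and the stack top, yielding essentially disjoint sources. In the probabilistic case, for each fixed source state and fixed $\vec{s}$ the outgoing weights sum to $\sum_{(\nu,q')}\rightarrow(\vec{s},q)(\nu,q')\leq 1$, which is precisely the sub-probability condition on graphings.

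For (c), Proposition \ref{prop:generalised46} supplies a weight-preserving bijection between non-empty computation traces of $\automaton{M}$ on $\word{w}$ and alternating paths of source $\support{\ttrm{a}}_{Y}$ in $\autograph{M}\plug\oc\graphingtw{w}$, where $Y=[0,1/\lg(\word{w})]^k\times[0,1]^{\naturalN}$. In the deterministic case, an accepting trace of $\automaton{M}$ on $\word{w}$ produces an alternating cycle with $\halfidentity[\support{\ttrm{a}}_{Y}]$ whose contribution to the measurement is nonzero and finite; the same computation against each other element of the family $\testdet$ remains nonzero and finite by restriction to the smaller cube, yielding $\autograph{M}\plug\oc\graphingtw{w}\poll\testdet$ and hence $\mathcal{L}(\automaton{M})\subseteq\mathcal{L}^{\testdet}(\autograph{M})$. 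In the probabilistic case, Corollary \ref{pathaccept} identifies the acceptance probability with the sum of weights of these paths, and the lemma preceding the present theorem translates $\Pr_{\automaton{M}}(\text{accept }\word{w})>\tfrac{1}{2}$ into orthogonality with $\testprob[1/2]$, giving $\mathcal{L}(\automaton{M})\subseteq\mathcal{L}^{\testprob[1/2]}(\autograph{M})$. The main delicate step will be (b): carefully certifying the essential disjointness of sources in the deterministic case and the sub-additivity of weights in the probabilistic one, while simultaneously respecting the encoding of the initial $\star$-symbol on the stack through the restriction of initial-state edges to the cylinder $V(\star)$ in the stack factor.
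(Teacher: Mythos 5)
Your proposal follows essentially the same route as the paper, which proves this theorem exactly by combining the translation $\automaton{M}\mapsto\autograph{M}$ with \autoref{prop:generalised46}, \autoref{pathaccept} and the lemma on $\testprob[\epsilon]$ (your items (a) and (b) spell out checks the paper leaves implicit). One small caution: membership of a language in $\preddet{m_{i}}$ or $\predprob{m_{i}}$ requires the \emph{equality} $\mathcal{L}(\automaton{M})=\mathcal{L}^{\testfont{T}}(\autograph{M})$, not just the inclusion you state, but since the results you invoke are biconditional (the path/trace correspondence and the orthogonality lemma), the reverse inclusion follows from the same ingredients.
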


\section{Completeness}

We here generalise a technical lemma from the previous paper \cite[Lemma4.14]{seiller-goinda} to include probabilities and pushdown stacks. The principle is the following. By the author's proof, the computation of a $\microcosm{m}_i$-machine given an input $w$ can be simulated by a computation of paths between finite graphs. This can be extended with probabilistic weights in a straightforward manner. Now, the operations on stacks could be thought of as breaking this result, since stacks are arbitrarily long. However, this can be dealt with by considering weight within the monoid $\Theta$ generated by $\{0,1,\star,c\}$ and the relations $c0=c1=c\star=\epsilon$ where epsilon is the empty sequence, thus the neutral element of $\Theta$. We will thus obtain that the computation of a $\microcosm{n}_i$-machine given an input $w$ can be simulated by a computation of paths between finite graphs with weights in $\Omega\times\Theta$.

\begin{lemma}[Technical Lemma]\label{technicallemma}
Let $M$ be a $\microcosm{n}_{\infty}$-machine. The computation of $M$ with a representation $\oc W$ of a word $\word{w}$ is equivalent to the execution of a finite\footnote{Whose size depends on both the length of $\word{w}$ and the smallest $k$ such that $M$ is a $\microcosm{m}_{k}$ machine and (TODO).} $\Omega\times\Theta$-weighted graph $\bar{M}$ and the graph representation $\graphtw{w}$ of $\word{w}$.
\end{lemma}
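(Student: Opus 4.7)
The plan is to extend the technique of \cite[Lemma 4.14]{seiller-goinda}, which establishes the analogous result for $\microcosm{m}_{\infty}$-machines without weights or stack, in two independent directions: accommodating probabilistic weights, and encoding the pushdown operations through the monoid $\Theta$.

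First, I would recall the original argument. Given a $\microcosm{m}_{k}$-machine $M$ and a word $\word{w}$ of length $n$, the key observation is that the relevant interaction concentrates on the subspace $Y = [0,\frac{1}{n}]^{k}\times[0,1]^{\naturalN}$, on which only the first $k$ coordinates of the second factor of $\measure{X}$ are ever permuted. This allows one to collapse the infinitary picture $\autograph{M}\plug\oc\graphingtw{w}$ to the execution of a finite graph $\bar{M}$ -- whose vertices record the automaton state together with a permutation in $\mathfrak{G}_{k}$ and an integer translation tracking head positions -- with the finite graph $\graphtw{w}$. The bijection between alternating paths in the two pictures follows by direct inspection of the edges in the definition of $\autograph{M}$.

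Second, the extension to probabilistic weights is essentially cost-free. Since execution of graphings is weight-multiplicative and the monoid $\Omega$ already contains the interval $[0,1]$ of probability weights, one simply attaches to each edge of $\bar{M}$ the transition probability of the corresponding automaton transition as its $\Omega$-component. Path weights in $\bar{M}\plug\graphtw{w}$ then agree with the path weights in $\autograph{M}\plug\oc\graphingtw{w}$, which combined with Corollary \ref{pathaccept} is what is required downstream.

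Third, and this is the genuinely new step, the stack is handled by adding a $\Theta$-component to the weights. Each edge of $\bar{M}$ arising from a $\pushzero$ (resp.\ $\pushone$, $\pushstar$) transition receives the letter $0$ (resp.\ $1$, $\star$) as its $\Theta$-weight, while each edge arising from a $\pop$ transition receives the letter $c$. Composition along a path yields a word in $\Theta$, and the relations $c0=c1=c\star=\epsilon$ encode the LIFO stack discipline: because the composition of realisers along a traversed path is the reverse of the traversal order, a push of some $\sigma\in\{0,1,\star\}$ immediately followed by a pop contributes a factor $c\cdot\sigma$, which reduces to $\epsilon$. An induction on path length extends this to show that a path in $\bar{M}\plug\graphtw{w}$ has $\Theta$-weight reducing to $\epsilon$ precisely when it corresponds to an alternating path in $\autograph{M}\plug\oc\graphingtw{w}$ whose induced dynamics on the $\{\star,0,1\}^{\naturalN}$ component are well-defined (i.e.\ every pop reads the symbol pushed by its most recent unmatched earlier push, and the stack returns to its initial content $\star$).

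The main obstacle is this last point: showing that the purely syntactic reductions in $\Theta$ faithfully mirror the LIFO semantics of the stack as encoded on the $\{\star,0,1\}^{\naturalN}$ coordinate of $\measure{X}$. I would attack it via a simultaneous induction on the length of a path prefix, tracking in parallel (i) the actual stack content read off the $\{\star,0,1\}^{\naturalN}$ coordinate after the prefix, and (ii) the reduced $\Theta$-word of that prefix, and showing that these two data determine each other. The base case is ensured, exactly as in the definition of $\autograph{M}$, by restricting the source of edges issued from the initial state to $V(\star)$, so that both the stack and the $\Theta$-trace start from matching neutral data.
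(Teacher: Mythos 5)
Your overall route is the same as the paper's: observe that an $\microcosm{n}_{N}$-machine only acts on the first $N$ copies of $[0,1]$, that on input $\word{w}$ of length $k$ all realisers permute the $N$-cubes of side $1/k$, collapse the execution to finite (thick) graphs $\bar{M}$ and $\graphtw{w}$, carry the transition probabilities as the $\Omega$-component of the weights, and record the stack instructions as the $\Theta$-component via $\pushzero\mapsto 0$, etc., with cancellation in function-composition order. (Your phrasing that the interaction \enquote{concentrates on $Y=[0,\frac{1}{n}]^{k}\times[0,1]^{\naturalN}$} conflates the cube decomposition used here with the restriction of the tests appearing in \autoref{prop:generalised46}, but the mechanism you describe is the right one.)

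The genuine gap is in the step you single out as the crux, namely the claim that a path of $\bar{M}$ against $\graphtw{w}$ has $\Theta$-weight reducing to $\epsilon$ \emph{precisely when} it corresponds to a path of $\autograph{M}$ against $\oc\graphingtw{w}$ in which every pop reads the symbol pushed by its matching push. The implication from reducibility to consistency fails, because $\Theta$ has a single pop letter $c$ that cancels against \emph{every} symbol. In $\autograph{M}$ the pop transitions come in three families indexed by the expected top symbol $u$, distinguished only by the restriction of their source to $V(u)$ on the stack coordinate (and by the dialect recording $u$); passing to $\bar{M}$ erases that spatial restriction and all three edges receive the same $\Theta$-weight $c$. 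Hence a path of $\bar{M}$ that performs $\pushzero$ and then takes the pop edge indexed by $u=1$ has $\Theta$-weight $c\cdot 0=\epsilon$, yet it lifts to no path of $M$: the composite realiser has null domain. So the reduced $\Theta$-word does not determine the stack content, and your proposed induction \enquote{(i) and (ii) determine each other} cannot close as stated. What the lemma needs, and what the paper's proof actually asserts, is the weight-preserving projection of genuine graphing paths: every alternating path of $M$ with probability $p$ and stack operation $\psi$ yields a path of $\bar{M}$ of weight $(p,[[\psi]])$; stack-consistency in the converse direction must be recovered not from $\Theta$ alone but from the dialect component recording the popped symbol (this is exactly what the simulating pushdown automaton exploits later, since it consults its real stack), or alternatively by refining $\Theta$ to symbol-specific pop letters $c_{0},c_{1},c_{\star}$ with $c_{u}u=\epsilon$. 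With that repair your argument coincides with the paper's.
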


\begin{proof}
The proof of this lemma follows the proof of the restricted case provided in earlier work \cite{seiller-goinda}. Based on the finiteness of $\microcosm{n}_{\infty}$-machines, there exists an integer $N$ such that $M$ is a $\microcosm{n}_{N}$-machine. We now pick a word $\word{w}\in\Sigma^{k}$ and $(0,W_{\word{w}})$ the project $(0,\oc \graphingtw{w})$. All maps realising edges in $M$ or in $\oc \graphingtw{w}$ are of the form $\phi\times\identity[\bigtimes_{i=N+1}^{\infty}{[0,1]}]\times \psi$ -- i.e. they are the identity on copies of $[0,1]$ indexed by natural numbers $>N$. So we can consider the underlying space to be of the form $\integerN\times[0,1]^{N}\times\{\star,0,1\}^{\naturalN}$ instead of $\measure{X}$ by just replacing realisers $\phi\times\psi$. Moreover, the maps $\phi$ here act either as permutations over copies of $[0,1]$ (realisers of edges of $M$) or as permutations over a decomposition of $[0,1]$ into $k$ intervals (realisers of $\oc \graphingtw{w}$). Consequently, all $\phi$ act as permutations over the set of $N$-cubes $\{\bigtimes_{i=1}^{N}[k_i/k,(k_{i}+1)/k]~|~0\leqslant k_i\leqslant k-1\}$, i.e. their restrictions to $N$-cubes are translations. 

Based on this, one can build two (thick\footnote{Thick graphs are graphs with dialects, where dialects act as they do in graphings, i.e.\ as control states.}) graphs $\bar{M}$ and $\bar{W}_{\word{w}}$ over the set of vertices $\ext{\Sigma}\times\{\bigtimes_{i=1}^{N}[k_i/k,(k_{i}+1)/k]~|~0\leqslant k_i\leqslant k-1\}$ as in the proof of the restricted lemma \cite{seiller-goinda}. The only difference is that we keep track of weights and encode the stack operations as elements of $\Theta$ (we use the identification: $[[\pushone]]=1$, $[[\pushzero]]=0$, $[[\pushstar]]=\star$, $[[\pop]]=c$). There is an edge in $\bar{M}$ of source $(s,(k_{i})_{i=1}^{N},d)$ to $(s',(k'_{i})_{i=1}^{N},d')$ and weight $(p,[[\psi]])$ if and only if there is an edge in $M$ of source $\bracket{s}\times\{d\}$ and target $\bracket{s'}\times\{d'\}$, of weight $p$ and whose realisation is $\phi\times\psi$ where $\phi$ sends the $N$-cube $\bigtimes_{i=1}^{N}[k_i/k,(k_{i}+1)/k]$ onto the $N$-cube $\bigtimes_{i=1}^{N}[k'_i/k,(k'_{i}+1)/k]$. There is an edge (of weight $(1,\epsilon)$) in $\bar{W}_{\word{w}}$ of source $(s,(k_{i})_{i=1}^{N},d)$ to $(s',(k'_{i})_{i=1}^{N},d')$ if and only if $d=d'$, $k_i=k'_i$ for $i\geqslant 2$ and there is an edge in $W_{\word{w}}$ of source $\bracket{s}\times[k_1/k,(k_1+1)/k]\times[0,1]^{\naturalN}$ and target $\bracket{s'}\times[k'_1/k,(k'_1+1)/k]\times[0,1]^{\naturalN}$.

Then one checks that there exists an alternating path between $M$ and $\oc \graphingtw{w}$ of weight $p$ and whose stack operation is equal to $\psi$ if and only if there exists an alternating path between $\bar{M}$ and $\graphtw{w}$ of weight $(p,[[\psi]])$.
\end{proof}

This lemma will be useful because of the following proposition \cite[Proposition 4.15]{seiller-goinda}.

\begin{proposition}\label{orthogonalityandcycles}
For any $\microcosm{n}_\infty$-machine $M$ and word representation $\oc W$, $M\plug\oc W$ is orthogonal to $\testprob[\epsilon]$ if and only if the sum of the weights in $\Omega$ of alternating paths of $\Theta$-weight $\epsilon$ between $M$ and $\oc W\otimes \identity[\support{\ttrm{a}}]$ from $\support{\ttrm{a}}$ to itself is greater than $\epsilon$.
\end{proposition}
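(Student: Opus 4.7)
The plan is to derive this proposition by combining the Technical Lemma with the preceding measurement lemma for probabilistic tests, thereby generalising the analogous statement of the previous paper so as to simultaneously accommodate probabilistic weights and stack operations.

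First, I would invoke the Technical Lemma (Lemma~\ref{technicallemma}) to reduce the execution $M\plug \oc W$ to a computation over the finite $\Omega\times\Theta$-weighted graphs $\bar{M}$ and $\graphtw{w}$. The weight-preserving correspondence supplied there guarantees that every alternating path between $M$ and $\oc W$ corresponds to an alternating path in the finite setting, with $\Omega$-component recording the probabilistic weight and $\Theta$-component recording the net stack operation performed along the path. This lets me reason combinatorially about paths instead of wrestling with the measured graphing directly.

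Next, I would isolate which alternating paths contribute to the measurement. An alternating path represents a \emph{realisable} computation precisely when its $\Theta$-component evaluates to the neutral element $\epsilon$: by definition of $\Theta$ via the relations $c0=c1=c\star=\epsilon$, this holds iff every popped symbol was matched by a preceding push of the same symbol, i.e.\ iff the stack component, starting and ending at the same configuration along the path, is consistently manipulated. Paths with $\Theta$-component distinct from $\epsilon$ do not close into cycles at the level of the stack subspace $\{\star,0,1\}^{\naturalN}$ and therefore do not produce finite orbits in the measured space $\measure{X}$; by \autoref{def:measurement} they drop out of the measurement entirely.

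With that filter in place, I would apply the preceding lemma establishing that $\autograph{M}\plug\oc \graphingtw{w}\poll \testprob[\epsilon]$ is equivalent to $\sum p_i > \epsilon$, where the $p_i$ range over $\Omega$-weights of alternating paths with source and target $\support{\ttrm{a}}$. Tensoring $\oc W$ with $\identity[\support{\ttrm{a}}]$ allows such paths to close up into cycles by bouncing back at $\support{\ttrm{a}}$ through the identity edge, and restricting to paths of $\Theta$-weight $\epsilon$ keeps only those corresponding to genuine machine computations that empty the stack; together this yields exactly the inequality claimed.

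The main obstacle will be the last step: justifying rigorously that tensoring with $\identity[\support{\ttrm{a}}]$ combined with the restriction to $\Theta$-weight $\epsilon$ picks out \emph{exactly} those paths contributing to $\meas[]{\autograph{M}\plug\oc \graphingtw{w},\testprob[\epsilon]}$. This requires tracking how paths in $\bar{M}\plug \graphtw{w}$ lift back to the graphing and verifying that the $\Theta$-weight-$\epsilon$ condition is both necessary (stack-affecting paths do not produce finite orbits in $\measure{X}$ and so contribute $0$ to the measurement) and sufficient (trivially-stack-affecting paths produce genuine finite orbits whose weight is faithfully recorded by the $\Omega$-component). Once this correspondence between finite-orbit paths and $\Theta$-trivial paths is nailed down, the equivalence with orthogonality to $\testprob[\epsilon]$ follows immediately from the previous lemma.
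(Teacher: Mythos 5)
Your route diverges from the paper's, and the step you yourself defer as the ``main obstacle'' is precisely the whole content of the paper's proof, so the argument has a hole exactly where the proposition lives. The paper does not go through Lemma~\ref{technicallemma} here at all: it proves the statement directly from the \emph{trefoil property}, $\meas{(0,M)\plug(0,\oc W),(t,T)}=\meas{(0,M),(0,\oc W)\plug(t,T)}$, together with the observation that the supports of $\oc W$ and of the test are disjoint, so that $(0,\oc W)\plug(t,T)=(0,\oc W)\otimes(t,T)$. This cut-moving step is what converts orthogonality of $M\plug\oc W$ to $\testprob[\epsilon]$ into a statement about alternating cycles between $M$ and $\oc W\otimes\halfidentity[\support{\ttrm{a}}]$, and it is the only reason the object $\oc W\otimes\identity[\support{\ttrm{a}}]$ appears in the statement at all. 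In your plan this is replaced by the remark that tensoring with $\identity[\support{\ttrm{a}}]$ ``allows such paths to close up into cycles by bouncing back at $\support{\ttrm{a}}$'', with no argument; that associativity fact is what you would have to prove, not a bookkeeping detail. Moreover, the measurement lemma you invoke is stated and proved for automaton translations $\autograph{M}$ cut against $\oc\graphingtw{w}$; for an arbitrary $\microcosm{n}_{\infty}$-machine $M$, passing from alternating cycles of arbitrary length to the simple sum $\sum p_i$ over length-two cycles is exactly the difficulty that the family of tests with shrinking supports $[0,\frac{1}{n}]^{n}$ is designed to handle (see the subsection characterising \NLogspace), so that lemma cannot be cited off the shelf for $M$.

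Your filtering step is also factually wrong. It is not true that a path whose $\Theta$-weight differs from the unit produces no finite orbits: a path in $M$ that pushes a symbol and immediately pops it (or pops and re-pushes the popped symbol, depending on the composition order adopted) acts as the identity on a non-negligible cylinder set of stacks, hence closes cycles and contributes to the measurement of \autoref{def:measurement}, while its $\Theta$-weight need not reduce to the unit, since the relations $c0=c1=c\star=\epsilon$ cancel only one of the two orders. In the paper the stack constraint is not an intrinsic property of the measurement; it is imposed by the tests themselves, whose supports are restricted to the cylinders $V(\star^{n})$, and this support-shrinking argument appears \emph{after} the proposition, in the construction of the simulating automaton, where it forces the relevant alternating paths to carry weights of the form $(p,c^{i})$. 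So your proposed reduction proves the wrong filtering claim and attributes the stack condition to the wrong mechanism: the proposition itself should be obtained by the adjunction argument above, and the $\Theta$-weight bookkeeping via Lemma~\ref{technicallemma} belongs to the completeness step that follows it.
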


\begin{proof}
Using the \emph{trefoil property} for graphings \cite{seiller-goig}, which in this case becomes 
\[\meas{(0,M)\plug(0,\oc W),(t,T)}=\meas{(0,M),(0,\oc W)\plug(t,T)}.\]
Since the support of $\oc W$ and the test are disjoint, we have the equality $(0,\oc W)\plug(t,T)=(0,\oc W)\otimes(t,T)$. Hence $M\plug\oc W$ is orthogonal to $\testprob[\epsilon]$ if and only if $M$ is orthogonal to $(0,\oc W)\otimes (\log(1-\frac{1}{2}.u),\halfidentity\support{\ttrm{a}}^{V(\star^n)}_{[0,\frac{1}{n}]^{n}\times[0,1]^{\naturalN}}])$ for all $u\in [0,\epsilon]$.

Thus $M\plug\oc W$ is orthogonal to $\testdetneg$ if and only if $\log(1-\frac{1}{2}.u)+\meas{M,\oc W\otimes \halfidentity\support{\ttrm{a}}^{V(\star^n)}_{[0,\frac{1}{n}]^{n}\times[0,1]^{\naturalN}}]}\neq 0,\infty$ for all $u\in[0,\epsilon]$, i.e. if and only if there are no alternating cycles between $M$ and $\oc W\otimes \halfidentity\support{\ttrm{a}}^{V(\star^n)}_{[0,\frac{1}{n}]^{n}\times[0,1]^{\naturalN}}]$ of weight $a\cdot \mathbf{1}$ with $a\leqslant \epsilon$. Moreover, since no weights in $M$ and $\oc W$ are equal to $\lambda\cdot\mathbf{1}$, such cycles need to go through $\support{\ttrm{r}}$. 
\end{proof}

We will now define an automaton that will compute the same language as a given $\microcosm{n}_{\infty}$-machine $M$. Notice one subtlety here: a $\microcosm{n}_i$ machine can use the $\pushstar$ instruction at any given moment. Thus the automata to be defined works with a ternary stack -- over the alphabet $\{\star,0,1\}$ -- and not a binary one. This is fine because from any automaton with a ternary stack one can define an automaton on a binary stack recognising the same language (very naively, using a representation of the ternary alphabet as words of length 2, this simply multiplies the number of state by a factor of 2).

Now, a key element in the result which has not yet appeared is that the shrinking of the support of the tests as $n$ grows implies that the cycles considered in the previous proof need to go through the first $N$-cube in the finite graphs $\bar{M}$ and $\graphtw{w}$. The same trick implies that the stack is emptied during the path, i.e. the weight of the path alternating between the finite graphs $\bar{M}$ and $\graphtw{w}$ is required to be equal to $(p,c^i)$. All in all, for any $\microcosm{n}_\infty$-machine $M$ and word representation $\oc W$, $M\plug\oc W$ is orthogonal to $\testprob[\epsilon]$ if and only if there exists a path of weight $(p,c^i)$ with $p>\epsilon$ from $\support{a}_{\bigtimes_{i=1}^{N}[0,1/k]}$ -- the first $N$-cube on $\ttrm{a}$ -- to itself. It is then easy to define an automata $\autograph{M}$ that computes the same language as $M$ by simply following the transitions of $\bar{M}$, and check that this automata accepts a word $w$ with probability $p$ if and only if there is a path of weight $(p,c^i)$ with $p>\epsilon$ from $\support{a}_{\bigtimes_{i=1}^{N}[0,1/k]}$ -- the first $N$-cube on $\ttrm{a}$ -- to itself. This leads to the following proposition.

\begin{proposition}
Let $G$ be a $\microcosm{n}_i$-machine. The automaton $\autograph{G}$ is such that for all word $w$ and all word representation $!W$ of $w$, the sum of the weights in $\Omega$ of alternating paths of $\Theta$-weight $\epsilon$ between $G$ and $\oc W\otimes \identity[\support{\ttrm{a}}]$ from $\support{\ttrm{a}}$ to itself is greater than $\epsilon$ if and only if $\autograph{G}$ accepts $w$ with probability greater than $\epsilon$.
\end{proposition}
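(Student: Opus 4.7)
The plan is to build the automaton $\autograph{G}$ explicitly from the finite graph $\bar{G}$ produced by the Technical Lemma, and then to exhibit a weight-preserving bijection between accepting runs of $\autograph{G}$ on $w$ and the alternating paths described in the statement. By the Technical Lemma, the execution of $G$ against $\oc W$ is already reduced to alternating paths between the finite $\Omega \times \Theta$-weighted graphs $\bar{G}$ and $\graphtw{w}$, so we may work entirely in that finite setting.

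First, I would define the states of $\autograph{G}$ as the dialect of $\bar{G}$ paired with the $N$-cube index $\vec{k} \in [k]^N$ recording the positions of the $N$ heads (this is exactly what the vertices of $\bar{G}$ already encode). Each transition of $\autograph{G}$ is read off from an edge of $\bar{G}$: the probabilistic weight of the transition is the $\Omega$-component of the edge weight, and the stack instruction is determined by the $\Theta$-component via the identification $\pushone \mapsto 1,\ \pushzero \mapsto 0,\ \pushstar \mapsto \star,\ \pop \mapsto c$. Since $\bar{G}$ is finite, this defines a genuine pushdown automaton (over the ternary stack alphabet $\{\star,0,1\}$, which as noted in the text can be simulated by a binary-stack automaton at the cost of doubling the state set).

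Second, I would show that the monoid relations $c\cdot 0 = c\cdot 1 = c\cdot\star = \epsilon$ in $\Theta$ precisely encode stack consistency: a word in $\Theta$ reduces to $\epsilon$ if and only if the corresponding sequence of push/pop instructions leaves the stack in its initial state (with the convention, built into the machine design, that the bottom-of-stack symbol $\star$ is re-pushed whenever popped, accounting for the residual $c^i$ terms observed in the discussion preceding the statement). Combined with the support condition — paths going from $\support{\ttrm{a}}$ to itself, which forces the head indices to return to $0$ and so matches the design convention that all heads must be repositioned on $\star$ before acceptance — a path of the required $\Theta$-weight corresponds exactly to a complete accepting computation of $\autograph{G}$ on $w$. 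The bijection is then obtained by reading off, from each such alternating path, the sequence of transitions of $\autograph{G}$ it induces; inversely each accepting run lifts uniquely to an alternating path in $\bar{G},\graphtw{w}$ since edges of $\bar{G}$ and those of $\graphtw{w}$ alternate (head moves are interleaved with symbol reads, exactly as in the graphing encoding).

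Third, I would verify that the $\Omega$-weight of an alternating path, being the product of the $\Omega$-components of its constituent edges, coincides with the probability assigned by $\autograph{G}$ to the corresponding run. Summing over all accepting runs — equivalently, over all alternating paths of the correct type — yields the equivalence: the sum of $\Omega$-weights is greater than $\epsilon$ if and only if $\autograph{G}$ accepts $w$ with probability greater than $\epsilon$. The main obstacle will be the bookkeeping involved in matching the stack discipline of $\autograph{G}$ with the reductions in $\Theta$, in particular correctly handling the initialisation of the stack with $\star$, the convention of re-pushing $\star$ when popped, and the passage from the ternary to the binary stack alphabet; one must check that no distinct alternating paths are collapsed by this translation, and that partial paths whose $\Theta$-weight does not reduce to the target value are correctly excluded from the correspondence.
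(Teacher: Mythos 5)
Your overall route is the same as the paper's: invoke the Technical Lemma to replace the execution of $G$ against $\oc W$ by alternating paths between the finite $\Omega\times\Theta$-weighted graphs $\bar{G}$ and $\graphtw{w}$, read the automaton $\autograph{G}$ off the transitions of $\bar{G}$, and match acceptance probability with the sum of $\Omega$-weights of the paths of trivial (up to the residual $c^i$ bookkeeping) $\Theta$-weight from $\support{\ttrm{a}}$ to itself. That is exactly the argument sketched in the paper, and your treatment of the $\Theta$-reductions and of the re-pushed bottom symbol is consistent with it.

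There is, however, one concrete misstep in your construction of $\autograph{G}$: you take as \emph{states} of the automaton the dialect of $\bar{G}$ paired with the $N$-cube index $\vec{k}$ recording the head positions. The cube index ranges over a set of size $k^{N}$ where $k$ is the length of the input word, so the object you define is not a single automaton valid \enquote{for all words $w$} as the statement requires, and it is not a machine of the class needed for the completeness theorem this proposition feeds into (a $k$-head probabilistic two-way automaton with pushdown stack has a \emph{fixed} finite control, independent of the input). The cube index must not live in the finite control: it is realised by the physical positions of the $N$ heads of $\autograph{G}$ on the input tape, while the control state retains only the word-independent data (essentially the dialect of $G$, together with the symbol/direction component of the vertex). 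For this to make sense you must also check a point you currently leave implicit: although $\bar{G}$ itself depends on $k$, its transitions are induced by the finitely many edges of $G$, whose realisers in $\microcosm{n}_{N}$ act uniformly on $N$-cubes (as permutations of coordinates, unit translations, and stack maps) for every $k$; this uniformity is what lets a single finite transition table of $\autograph{G}$ simulate $\bar{G}$ for all input lengths, and it is precisely what the proof of the Technical Lemma provides. With that correction the run-to-path bijection and the weight computation in your second and third steps go through as you describe.
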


Putting together this result and the main theorem of the last section, we obtain. 

\begin{theorem}
For all $i\in\naturalN^{\ast}\cup\{\infty\}$, 
\[\preddet{m_{i}}=\text{\cctwdfa{i}}\hspace{2em} \preddet{n_{i}}=\text{\cctwdfastack{i}}\]
\[\predprob{m_{i}}=\text{\cctwpfa{i}}\hspace{2em} \predprob{n_{i}}=\text{\cctwpfastack{i}}\]
\end{theorem}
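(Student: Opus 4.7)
The plan is to combine the soundness inclusions already established in the previous theorem with the completeness inclusions, the latter being obtained from the Technical Lemma, Proposition~\ref{orthogonalityandcycles}, and the proposition relating $\autograph{G}$ to the underlying computation. Fix $i\in\naturalN^*\cup\{\infty\}$ and suppose $L=\mathcal{L}^{\testprob[1/2]}(M)$ (resp.\ $L=\mathcal{L}^{\testdet}(M)$) for some $\microcosm{n}_i$-machine $M$ lying in the sub-probabilistic (resp.\ deterministic) model. For any word $w$ with representation $\oc W$, I would apply the Technical Lemma to replace $M\plug \oc W$ by the execution of a finite $\Omega\times\Theta$-weighted thick graph $\bar M$ against $\bar W_w$, where the $\Theta$-component tracks stack operations via the identification $[[\pushu]]=u$, $[[\pop]]=c$. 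Finiteness is crucial and follows because $M$ acts non-trivially only on the first $i$ copies of $[0,1]$, reducing the semantic object to a combinatorial one.

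Next, I would apply Proposition~\ref{orthogonalityandcycles} to translate orthogonality with the relevant test into the existence of alternating cycles between $\bar M$ and $\bar W_w$ passing through $\support{\ttrm{a}}$ whose $\Omega$-weights sum to more than $1/2$ (resp.\ are non-zero). The shrinking supports $[0,1/n]^{n}\times[0,1]^{\naturalN}$ and $V(\star^n)$ in the test's identity graphings are precisely what force admissible cycles to start and end with all heads on the left endmarker and with an emptied stack, i.e.\ with $\Theta$-weight of the form $c^{j}$.

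I would then define the automaton $\autograph{M}$ whose states record the current vertex of $\bar M$ (and in particular the dialect part of $M$'s states, the permutation on heads, and the last popped symbol), and whose transitions are read off the edges of $\bar M$: the $\Omega$-component yields the transition probability (a sub-probability distribution because $M$ is sub-probabilistic, a $\{0,1\}$-distribution because $M$ is deterministic), and the $\Theta$-component dictates the stack instruction (trivial when $M$ is an $\microcosm{m}_i$-machine, giving an automaton without pushdown). The number of heads is $i$ since only the first $i$ coordinates of $\seq{s}$ are permuted. The ternary stack alphabet $\{\star,0,1\}$ is folded into a binary one as sketched before the last proposition. The final proposition then gives that $\autograph{M}$ accepts $w$ with probability $>1/2$ (resp.\ accepts at all) iff $M\plug\oc W$ passes the test, so $L=\mathcal{L}(\autograph{M})$ belongs to \cctwpfastack{i} (resp.\ \cctwdfastack{i}, \cctwpfa{i}, or \cctwdfa{i} depending on which of the four cases we treat).

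The main obstacle I anticipate is, in the probabilistic case, verifying that the fusion of parallel edges of $\bar M$ (the step invoking \cite[Proposition~16]{seiller-goim} in the proof of the $\testprob[\epsilon]$ lemma) actually matches the aggregation of probabilities of distinct computation traces of $\autograph{M}$, and that the sub-probabilistic condition on $M$ passes through execution cleanly enough to ensure that $\autograph{M}$'s transition relation defines a genuine sub-probability distribution at each configuration. A secondary but necessary subtlety is that the argument must simultaneously produce a \emph{deterministic} automaton when $M$ is a deterministic graphing: here one appeals to the fact that determinism of $M$ means at most one outgoing edge per point of $\bar M$, so the constructed $\autograph{M}$ is automatically in $\twdfastack{i}$ (or $\twdfa{i}$ when no stack is used).
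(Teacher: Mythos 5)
Your proposal follows essentially the same route as the paper: the soundness inclusions are taken from the preceding theorem, and the converse inclusions are obtained by combining the Technical Lemma, Proposition~\ref{orthogonalityandcycles} (with the shrinking test supports forcing cycles through the first cube with emptied stack), and the construction of the automaton $\autograph{M}$ reading off the finite graph $\bar M$, which is exactly how the paper concludes. Your added remarks on edge fusion in the probabilistic case and on determinism being inherited by the constructed automaton are points the paper leaves implicit, but they do not change the argument.
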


\begin{corollary}
In particular,
\[ \predprob{m_{\infty}}=\PLogspace,\hspace{2em} \predprob{n_{\infty}}=\PPtime. \]
\end{corollary}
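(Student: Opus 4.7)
The plan is to derive the corollary as an immediate specialisation of the preceding theorem combined with the two classical identifications recalled in Definition~\ref{def:cc}. Concretely, the theorem proved just above establishes, for every $i\in\naturalN^{\ast}\cup\{\infty\}$, the equalities $\predprob{m_{i}}=\cctwpfa{i}$ and $\predprob{n_{i}}=\cctwpfastack{i}$. The corollary concerns only the two limit cases $i=\infty$, so nothing new needs to be proved about the semantic side: I would simply instantiate.

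In more detail, the first step is to take $i=\infty$ in $\predprob{m_{i}}=\cctwpfa{i}$, which directly yields $\predprob{m_{\infty}}=\cctwpfa{\infty}$. The second step is to invoke the classical result, recalled in Definition~\ref{def:cc} with citation to Macarie and to the survey by Holzer, that the class of languages recognised by probabilistic two-way multihead finite automata with unbounded number of heads coincides with $\PLogspace$; that is, $\cctwpfa{\infty}=\PLogspace$. Combining the two equalities gives $\predprob{m_{\infty}}=\PLogspace$.

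For the second identity the argument is identical, with pushdown stacks added. Taking $i=\infty$ in $\predprob{n_{i}}=\cctwpfastack{i}$ gives $\predprob{n_{\infty}}=\cctwpfastack{\infty}$, and the classical equality $\cctwpfastack{\infty}=\PPtime$ recalled in Definition~\ref{def:cc} then yields $\predprob{n_{\infty}}=\PPtime$. There is no genuine obstacle here: all the work lies in the preceding soundness and completeness theorems, and the corollary is just the conjunction of those with two already established machine-theoretic characterisations of $\PLogspace$ and $\PPtime$. The only point to double-check is that the technical conventions adopted in Definition~\ref{def:cc} (fixed end-markers equal to $\star$, unique initial/accept/reject states, one head moved per step, heads repositioned and stack emptied before accepting or rejecting) do not alter the classes in question, which is standard and was asserted in the excerpt immediately after the definition.
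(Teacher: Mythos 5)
Your proposal is correct and follows exactly the route the paper intends: the corollary is the instantiation of the preceding theorem at $i=\infty$, combined with the classical identifications of the multihead probabilistic automata classes (with and without pushdown stack) with $\PLogspace$ and $\PPtime$ recalled in Definition~\ref{def:cc}. The only minor slip is in attribution — the paper cites Holzer for the $\PLogspace$ identification and Macarie for the deterministic pushdown case — but this does not affect the argument.
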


\section{Conclusion and Perspectives}

We have shown how to extend the author's method to capture numerous complexity classes between regular languages and polynomial time, showing how the method applies as well to probabilistic computation. This provides the first examples of implicit characterisations of probabilistic complexity classes. This is however related to bounded error classes, and it will be natural to try and characterise bounded-error classes. In particular, it should be possible to capture both $\BPL$ and $\BPP$ from the present work. We expect to be able to do so using the rich notion of type provided by Interaction graphs models. As an example, let us explain how the characterisations above can be expressed through types in the models.

%
We can define the language associated to a $\microcosm{m}$-machine $M$ and a test $\testfont{T}$ as a type. Indeed, we say a word $w$ is in the langage defined by $M$ if and only if $M\plug \de{w}\poll \de{t}$ for all $\de{t}\in\testfont{T}$. Using standard properties of the execution and orthogonality \cite{seiller-goig}, this can be rephrased as $M\plug \de{t}\poll \de{w}$. Thus, $M$ defines a set of projects $\{ M\plug \de{t} \mid \de{t}\in\testfont{T}\}$ which \emph{tests} natural numbers, i.e. elements of $\ListType$.

\begin{definition}
Let $M$ be a $\microcosm{m}$-machine and $\testfont{T}$ be a test. We define the type:
\[ \LangType[M]{\testfont{T}}=(\oc \ListType^{\pol}\cup\{M\plug \de{t} \mid \de{t}\in\testfont{T}\})^{\pol} \]
\end{definition}

In fact, $\LangType[M]{\testfont{T}}$ can also be defined as an intersection type, by noting $M(\testfont{T})=\{M\plug \de{t} \mid \de{t}\in\testfont{T}\}$.
\begin{lemma}
\[ \LangType[M]{\testfont{T}} = M(\testfont{T})^{\pol}\cap\oc \ListType \]
\end{lemma}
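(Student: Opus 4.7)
The plan is to derive the equality from two very standard properties of the orthogonality operation: that orthogonality turns unions into intersections, and that any conduct coincides with its double orthogonal. Both facts hold in full generality in the Interaction Graphs framework and are routinely used in the author's earlier work.

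First, I would recall the set-theoretic identity $(A\cup B)^{\pol} = A^{\pol}\cap B^{\pol}$, which is immediate from the definition of $E^{\pol}$ as the set of projects orthogonal to every element of $E$: a project is orthogonal to every element of $A\cup B$ if and only if it is orthogonal to every element of $A$ and to every element of $B$. Applying this with $A=\oc\ListType^{\pol}$ and $B=M(\testfont{T})$ gives
\[ \LangType[M]{\testfont{T}} = (\oc\ListType^{\pol})^{\pol}\cap M(\testfont{T})^{\pol}. \]

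Second, I would use that $\oc\ListType$ was defined as $\cond{(\twwprojects)^{\pol\pol}}$, so it is a conduct, and by definition every conduct $\cond{A}$ satisfies $\cond{A}=\cond{A}^{\pol\pol}$. Writing this as $(\oc\ListType^{\pol})^{\pol}=\oc\ListType$ and substituting yields
\[ \LangType[M]{\testfont{T}} = \oc\ListType\cap M(\testfont{T})^{\pol}, \]
which is the claimed equality.

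There is no real obstacle; the only subtlety worth mentioning explicitly is that one must justify $(\oc\ListType^{\pol})^{\pol}=\oc\ListType$ by appealing to the conduct condition rather than to the general $E\subseteq E^{\pol\pol}$ inclusion, since the latter gives only one containment. Everything else is a purely formal manipulation of the orthogonality relation on projects, and the supports match automatically because all projects involved live over the support of $\oc\ListType$.
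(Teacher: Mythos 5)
Your proof is correct and is exactly the standard argument the paper leaves implicit (the lemma is stated without proof there): the identity $(A\cup B)^{\pol}=A^{\pol}\cap B^{\pol}$ plus the fact that $\oc\ListType$, being defined as a biorthogonal $(\twwprojects)^{\pol\pol}$, is a conduct and hence satisfies $(\oc\ListType^{\pol})^{\pol}=\oc\ListType$. Your remark on why the biorthogonal closure (and not merely $E\subseteq E^{\pol\pol}$) is needed, and on the supports matching, is apt; nothing further is required.
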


The type represents a language in the following fashion.

\begin{proposition}
Let $M$ be a $\microcosm{m}$-machine and $\testfont{T}$ be a test. 
\[ \de{w}\in \LangType[M]{\testfont{T}} \Leftrightarrow \exists w\in \mathcal{L}^{\testfont{T}}(M), \de{w}\in\repany{w} \]
\end{proposition}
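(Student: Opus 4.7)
The plan is to prove both directions by combining the intersection-type reformulation $\LangType[M]{\testfont{T}} = M(\testfont{T})^{\pol}\cap\oc\ListType$ from the preceding lemma with the trefoil (associativity) property of execution and orthogonality, which gives the equivalence $M\plug\de{t}\poll\de{w} \iff M\plug\de{w}\poll\de{t}$ whenever the supports of $M$, $\de{t}$, $\de{w}$ are compatible. The uniformity of $\testfont{T}$ with respect to $\microcosm{m}$-machines is then what bridges this per-representation statement to a statement about the word itself.

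For the backward direction, suppose $w\in\mathcal{L}^{\testfont{T}}(M)$ and $\de{w}\in\repany{w}$. Then by definition of $\repany{w}\subseteq\twwprojects$, one has $\de{w}\in\twwprojects^{\pol\pol}=\oc\ListType$, so membership in the second factor of the intersection is immediate. For the first factor, let $\de{t}\in\testfont{T}$; since $w\in\mathcal{L}^{\testfont{T}}(M)=\mathcal{L}^{\testfont{T}}_{\forall}(M)$ by uniformity, $M\plug\de{w}\poll\de{t}$. Applying trefoil gives $\de{w}\poll M\plug\de{t}$, so $\de{w}\in M(\testfont{T})^{\pol}$, whence $\de{w}\in\LangType[M]{\testfont{T}}$.

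For the forward direction, suppose $\de{w}\in\LangType[M]{\testfont{T}}$. By the lemma, $\de{w}\in\oc\ListType$, and in the present framework the projects under consideration in $\oc\ListType$ are precisely those of the form $\oc L$ with $L$ a word graphing; thus there exists a word $w$ with $\de{w}\in\repany{w}$. It remains to check $w\in\mathcal{L}^{\testfont{T}}(M)$. For each $\de{t}\in\testfont{T}$, the hypothesis $\de{w}\in M(\testfont{T})^{\pol}$ yields $\de{w}\poll M\plug\de{t}$; trefoil in the reverse direction then gives $M\plug\de{w}\poll\de{t}$. Since this holds for the chosen representation $\de{w}$, uniformity of $\testfont{T}$ propagates it to \emph{every} $\de{w}'\in\repany{w}$, so $w\in\mathcal{L}^{\testfont{T}}_{\forall}(M)=\mathcal{L}^{\testfont{T}}(M)$.

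The main subtlety lies in the forward direction, at the point where one concludes from $\de{w}\in\oc\ListType$ that $\de{w}$ is of the shape $\oc L$ for some word graphing $L$ (and hence sits inside $\repany{w}$ for a specific $w$). The rest of the argument is a clean double application of trefoil, the only real ingredient being the observation that uniformity of the test is precisely what allows the per-representation orthogonality condition obtained from $M(\testfont{T})^{\pol}$ to be lifted to the universal condition defining $\mathcal{L}^{\testfont{T}}(M)$.
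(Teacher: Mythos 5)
Your backward direction is sound and follows the route the statement invites: $\de{w}\in\repany{w}\subseteq\twwprojects\subseteq\oc\ListType$, and uniformity together with the trefoil-style adjunction $M\plug\de{w}\poll{}\de{t}\Leftrightarrow\de{w}\poll{}M\plug\de{t}$ gives $\de{w}\in M(\testfont{T})^{\pol}$, so the intersection-type lemma yields $\de{w}\in\LangType[M]{\testfont{T}}$. The problem is the forward direction, at exactly the point you flag as the main subtlety and then assert rather than prove: from $\de{w}\in\oc\ListType$ you conclude that $\de{w}$ is of the form $\oc L$ with $L$ a word graphing, i.e. that $\de{w}\in\repany{w}$ for some word $w$. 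But $\oc\ListType$ is defined as the bi-orthogonal closure $(\twwprojects)^{\pol\pol}$, and such a closure is in general strictly larger than the generating set $\twwprojects$: it is a conduct, closed under double orthogonal, and nothing recalled in the paper identifies its inhabitants with word representations (indeed the whole apparatus of tests and uniformity exists precisely because one cannot treat arbitrary elements of $\oc\ListType$ as honest words). So the step \enquote{$\de{w}\in\oc\ListType$, hence $\exists w,\ \de{w}\in\repany{w}$} is unjustified, and as a general claim it is false.

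Concretely, your argument establishes the equivalence only under the reading in which $\de{w}$ is assumed from the outset to be a word representation (i.e. $\de{w}\in\repany{w}$ for some $w$); in that case the forward direction is exactly your trefoil-plus-uniformity computation, with the existence of $w$ given by hypothesis rather than extracted from membership in $\oc\ListType$. If the statement is instead read for arbitrary projects $\de{w}$, the forward implication would require showing that no element of $M(\testfont{T})^{\pol}\cap\oc\ListType$ other than a representation of a word in the language can occur, and neither your argument nor the material in the paper supplies such a characterisation of the inhabitants of $(\twwprojects)^{\pol\pol}$. You should either restrict the claim you prove to $\de{w}\in\twwprojects$, or supply the missing characterisation; as written, the forward direction has a genuine gap.
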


Now, this is particularly interesting when one considers that the model allows for the definition of (linear) \emph{dependent types}. Indeed, if $\cond{A}(\de{u})$ is a family of types (we suppose here that $\de{u}$ ranges over the type $\cond{U}$), the types $\sum_{\de{u}: \cond{U}}\cond{A}(u)$ and $\prod_{\de{u}: \cond{U}}\cond{A}(u)$ are well defined:
\[
\begin{array}{rcl}
\sum_{\de{u}: \cond{U}}\cond{A}(u) & = & \{ \de{u}\otimes \de{a}\mid \de{u}\in\cond{U}, \de{a}\in\cond{A}(\de{u}) \}^{\pol\pol}\\
\prod_{\de{u}: \cond{U}}\cond{A}(u) & = & \{ \de{f} \mid  \forall \de{u}\in\cond{U}, \de{f\plug u}\in\cond{A}(\de{u}) \}
\end{array}
\]

In the probabilistic model, we can use the following type to characterise\footnote{In fact, this type is more than $\PPtime$ and the latter should be defined as a quotient to identify those $M$ such that $\LangType[M]{\testfont{T}}$..} $\PPtime$:
\[\sum_{M: \oc\ListType\multimap\NBool} \LangType[M]{\testprob[\frac{1}{2}]}.\]
Indeed, we have that:
\[\cond{A}\in\PPtime \Leftrightarrow \exists M: \oc\ListType\multimap\NBool, \cond{A}=\LangType[M]{\testprob[\frac{1}{2}]}. \]

Noting that $\testprob[\frac{1}{2}]$ can be defined as a countable intersection (thus a universal quantification), it is equal to $\forall n\in\naturalN, \testprob[\frac{1}{2}+\frac{1}{n}]$. The above type then becomes:
\[\sum_{M: \oc\ListType\multimap\NBool}  \forall n\in\naturalN,~ M(\testprob[\frac{1}{2}+\frac{1}{n}])^{\pol}\cap\oc \ListType.\]
This type could then be used, through a quotient, to represent $\PPtime$ in the model $\pmodel{[0,1]}{n_\infty}$, and $\PLogspace$ in the model $\pmodel{[0,1]}{m_\infty}$.

We expect to provide types characterising bounded error predicates in the same way, providing characterisations of $\BPP$ in the model $\pmodel{[0,1]}{n_\infty}$, and $\BPL$ in the model $\pmodel{[0,1]}{m_\infty}$.
%
%

%
%

\bibliographystyle{abbrv}
\bibliography{thomas}

\end{document}